\newtheorem{proposition}{Proposition}
\newtheorem{remark}{Remark}
\newtheorem{lemma}{Lemma}
\begin{document}

\title{Design and Analysis of Transmit Beamforming for Millimetre Wave Base Station Discovery}
\author{Chunshan Liu, Min Li, Iain B. Collings, Stephen V. Hanly, Philip Whiting
\thanks{Chunshan Liu, Iain B. Collings, Stephen V. Hanly and Philip Whiting are with the Department of Engineering, Macquarie University, Sydney, NSW 2109, Australia (email: \{chunshan.liu, iain.collings, stephen.hanly, philip.whiting\}@mq.edu.au.). Min Li was with Macquarie University and is now with the School of Electrical Engineering and Computer Science, The University of Newcastle, Callaghan, NSW 2308, Australia (email: min.li@newcastle.com.au). }
}

\maketitle

\begin{abstract}
In this paper, we develop an analytical framework for the initial access (a.k.a. Base Station (BS) discovery) in a millimeter-wave (mm-wave) communication system and propose an effective strategy for transmitting the Reference Signals (RSs) used for BS discovery. Specifically, by formulating the problem of BS discovery at User Equipments (UEs) as hypothesis tests, we derive a detector based on the Generalised Likelihood Ratio Test (GLRT) and characterise the statistical behaviour of the detector. The theoretical results obtained allow analysis of the impact of key system parameters on the performance of BS discovery, and show that RS transmission with narrow beams may not be helpful in improving the overall BS discovery performance due to the cost of spatial scanning. Using the method of large deviations, we 
identify the desirable beam pattern that minimises the average miss-discovery probability of UEs within a targeted detectable region. We then propose to transmit the RS with sequential scanning, using a pre-designed codebook with narrow and/or wide beams to approximate the desirable patterns. The proposed design allows flexible choices of the codebook sizes and the associated beam widths to better approximate the desirable patterns. Numerical results demonstrate the effectiveness of the proposed method.
\end{abstract}

\section{Introduction}
Millimetre-Wave~(mm-wave) communication is an important ingredient in the fifth generation cellular networks~(5G)~\cite{boccardi2014five,andrews2014will,7010531,niu2015survey} due to the large contiguous available bandwidth in the 30-300 GHz spectrum, and has already attracted much research attention~\cite{roh2014millimeter,6717211,6600706,el2013multimode}. Due to the high free-space pathloss at mm-wave frequencies, it is well understood that beamforming facilitated by a large number of antennas at Base Stations (BSs) and/or User Equipments (UEs) is essential to achieve reasonable coverage range and high spectrum efficiency~\cite{6834753}. The reliance on high-gain beamforming at the data transmission phase, however, has posed challenges in the design of initial access for mm-wave systems, as conventional designs for micro-wave (e.g., below 6~GHz) cellular systems may not work for mm-wave systems~\cite{li2013anchor}.

In the initial access of existing cellular systems (such as the Long Term Evolution (LTE)), BSs periodically broadcast reference signals~(RSs) in synchronisation channels and UEs detect the presence of the RS and determine the correct frame timer of the BS. Due to the favourable propagation characteristics in the sub-6 GHz bands, the RS in LTE is transmitted omnidirectionally from the BSs~\cite{sesia2009lte} without sophisticated beamforming. Such a transmission strategy, if adopted in mm-wave systems, could lead to a much smaller discoverable range than the intended coverage range~\cite{li2013anchor}. In other words, a UE that can achieve a reasonably high data-rate when high-gain beamforming is employed may not be able to discover and synchronise to the BS. In this case, communication link establishment fails at the first step and initial access will then be the bottleneck of a mm-wave cellular system. 

{ The initial access for mm-wave cellular has been considered by~\cite{shokri2015millimeter,abu2014synchronization,7094805,barati2014dreictional}. Amongst them, reference~\cite{shokri2015millimeter} proposed a hybrid design that uses microwave band for initial access (BS discovery and synchronisation) and mm-wave band for subsequent beam alignment and data transmission. While this approach could provide macro-level discoverable range, one disadvantage is the increased hardware complexity due to the need for two radios (see Sec.~IV.B and Table-I of~\cite{shokri2015millimeter}).
\par References~\cite{abu2014synchronization,7094805,barati2014dreictional} instead focused on mm-wave standalone design and investigated the role of beamforming in mm-wave BS discovery.  Abu-Surra.~\emph{et al.} in~\cite{abu2014synchronization} numerically showed that a certain received signal to noise ratio~(SNR) of the RSs is required to guarantee a targeted probability of BS discovery. They then developed a beam-broadening technique to accommodate appropriate beamforming transmissions to attain the required SNR. Later, Desai.~\emph{et al.} in~\cite{7094805} proposed a hierarchical search strategy aided by the beamforming design from~\cite{6717211} to improve the SNR of the received RSs. If the transmission time for each beamformed RS is fixed, a narrower beam would provide higher received SNR and thus allow a beamformed UE to accumulate more energy for BS detection purpose, yielding a better BS discovery performance.

However, when beamforming is adopted, spatial scanning is inevitable to ensure reasonable discoverable range to UEs in all directions.
Consider multiple beams (with each pointing at a different direction) are multiplexed in time to scan the space and the time overhead incurred by spatial scanning is fixed. If narrower beams are deployed, higher beamforming gain can be achieved in each beamformed direction; but on the other hand, since more beams are required to cover all the directions of interest, a smaller fraction of time is made available to each beam. It hence remains unclear how the collected RS energy at UEs, and the performance of BS discovery, is affected by employing different beamforming strategies and different gain-time allocations. In this sense, it is natural to ask: What is the optimal RS transmission scheme that provides the best BS discovery performance for UEs within a targeted discoverable region?

The recent work in~\cite{barati2014dreictional} has compared omnidirectional (without beamforming) transmission with directional transmission of RSs via random beamformers. They derived generalised Likelihood Ratio Test (GLRT) detectors for BS discovery by assuming a rank-one (single-path) channel (see Sec. III of~\cite{barati2014dreictional}) and numerically evaluated the detection performance under both single-path and multi-path channel models. The results therein demonstrated that omnidirectional transmission outperforms directional transmission with random angular scanning. However, they did not consider more sophisticated RS transmission strategies such as sequential scanning with optimised beamformers to transmit of RSs.

\par In this paper, we propose a sequential beamforming strategy and develop a few fundamental limits on the performance of BS discovery. The analytical results in turn shed light on practical designs. Our contributions are three-fold stated as follows.
\par First, for the RS transmission strategy considered, we derive a GLRT detector for BS discovery that does not rely on a rank-one channel assumption. We explicitly characterise the statistical properties of the GLRT detector, from which we establish the relationship between the performance of mm-wave BS discovery and key system design parameters, including the RS sequence length and the time used for BS discovery at the UEs. These results provide useful guidance on proper choices of relevant parameters in practical system designs.
\par Second, by the method of large deviations~\cite{dembo2009large}, we show that it is the average beamforming gain that determines the asymptotic behaviour of the miss-discovery probability when searching time at UEs is large. We then identify desirable patterns of the average beamforming gain that minimises the average miss-discovery probability of UEs in an intended discoverable region.
\par Third, we propose a sequential beamforming strategy that uses a pre-designed beamforming codebook for RS transmission. This approach allows flexible choices between narrow and wide beams in constructing the codebook so as to better approximate the desirable patterns. The approach is shown to be beneficial when BS beamforming is constrained by per-antenna power limits. For practical choices of UE searching time, numerical results show that our proposed beamforming strategy provides orders of magnitude improvement compared to random beamformers.
}

\textbf{Notations}: boldface uppercase letters and boldface lowercase letters are used to denote matrices and vectors, respectively, e.g., $\mathbf{A}$ is a matrix and $\mathbf{a}$ is a vector. Notations $(\cdot)^T$ and $(\cdot)^\dag$ denote transpose and conjugate transpose, respectively. Notation $\|\mathbf{a}\|_2$ stands for the $l_2$ norm of vector $\mathbf{a}$, $||\mathbf{A}||_F$ stands for the Frobenius norm of matrix $\mathbf{A}$, and $\mathrm{Tr}\{\mathbf{A}\}$ stands for the trace of matrix $\mathbf{A}$. Finally, we use $\mathbb{E}\{\cdot\}$ to denote the expectation operation.

\section{System Model and Problem Description}\label{sec:Sig_model}
Consider a mm-wave communication system, in which each BS broadcasts its RS~(synchronisation sequence) periodically to allow UEs in its coverage range to detect the presence of the RS and to synchronise their frame timing. Fig.~\ref{fig:system_model} (a) depicts the frame structure of the transmit signal, where $T_{slot}$ is the duration of a slot and $T_{rs}<T_{slot}$ is the duration of each RS.

We also consider that directional transmissions of the RS are possible and the transmission pattern is repeated for every $J$ slots.
More specifically, as illustrated in Fig.~\ref{fig:system_model} (a), a set of $J$ beamformers  $\mathbf{w}_j \in {\mathbb C}^{N_T\times 1}$, $j=1,\ldots,J$, are sequentially employed for transmitting the RS sequence during $J$ consecutive slots. Here $N_T$ is the number of transmit antennas.

Denote the intended coverage angular space of a mm-wave BS as $\Omega$ and suppose that the beamformers come from a pre-designed codebook. To provide universal coverage of $\Omega$, a beamforming codebook consisting of multiple narrow beams or a codebook with a single but wide beam can be adopted, as illustrated in Fig.~\ref{fig:system_model}~(b) and (c). An omnidirectional transmission is a special case in such a setup and can be realised with $\mathbf{w}_j = [1,0,0,\ldots,0]^T$, $j = 1,\ldots,J$.


\begin{figure}[t]
\centering
\includegraphics[width=0.45\textwidth]{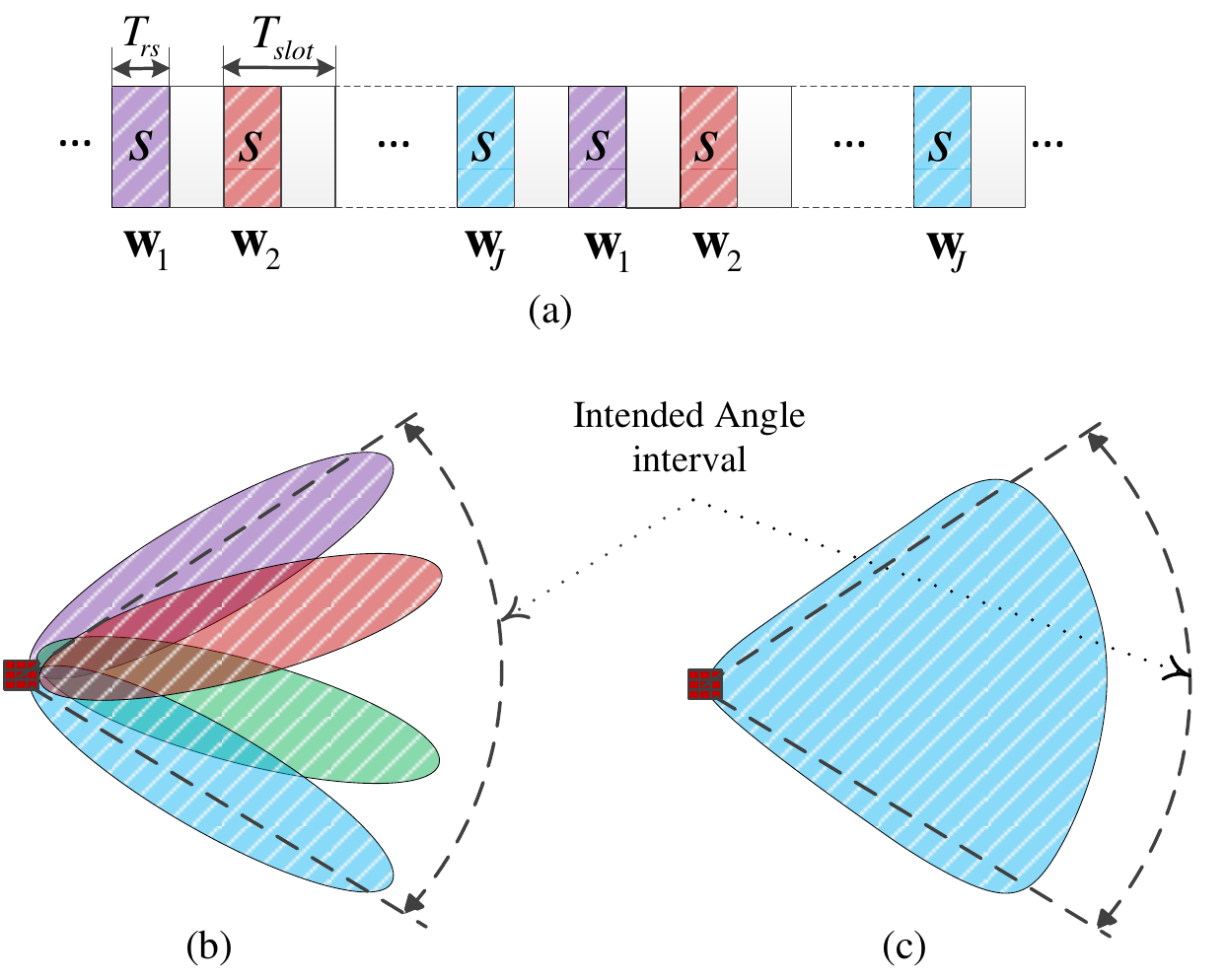}
\caption{An illustration of the system model: (a) frame structure; (b) $4$ beamformers are used to cover the intended angle interval; (c) a single beamformer is used to cover the intended angle interval.}
\label{fig:system_model}
\vspace{-0.5cm}
\end{figure}

The waveform of the RS in its complex baseband is denoted as $s(t)$, where $s(t) \neq  0$ iff $t \in [0,T_{rs}]$. The transmitted signal waveform $\mathbf{x}(t)$ is then represented as:
\begin{equation}\label{Tx_Sig_model}
\mathbf{x}(t) = \sum_{l=-\infty}^{+\infty}\mathbf{w}_{[l]_J}s\left(t-(l-1)T_{slot}\right),
\end{equation}
where $[l]_J$ is a modified modulo operation with $[l]_J = j$, if $l = kJ+j$, and $1\leq j \leq J$, $k \in \mathbb{Z}$.

The channel used to transmit the RS is sparse and is a superposition of a limited number of multi-path components, as evidenced by many measurements at mm-wave band (see~\cite{6834753} for an example). To facilitate the analysis, we assume that the channel is frequency flat and does not vary within each slot. Let $\mathbf{H}_l \in \mathbb{C}^{N_R \times N_T}$ denote the channel between the BS and a UE at slot~$l$, with $N_R$ being the number of receive antennas. Then, $\mathbf{H}_l$ can be represented as
\begin{equation}\label{eq:ch_multi_path}
\mathbf{H}_l = \sum_{q=1}^{Q}g_{l,q} \mathbf{u}^{\dagger}(\phi_{l,q})\mathbf{v}(\psi_{l,q}),
\end{equation}
where $Q$ is the number of multipath components, $g_{l,q}$ is the complex path coefficient of the $q$th path at the $l$th slot, and $\mathbf{u}(\phi_{l,q})$ and $\mathbf{v}(\psi_{l,q})$ are the steering vectors associated with Angle of Arrival (AoA) $\phi_{l,q}$ and Angle of Departure (AoD) $\psi_{l,q}$, respectively. In particular, considering a Uniformly Linear Array (ULA) with half wavelength antenna spacing at BS (or UE), steering vectors $\mathbf{u}(\phi_{l,q})$ (or $\mathbf{v}(\psi_{l,q})$) can be formulated as $[1,e^{j\pi\sin(\theta_{l,q})},\ldots,e^{j\pi(N_a-1)\sin(\theta_{l,q})}]$, with $N_a=N_T$ (or $N_a=N_R$) and $\theta_{l,q} = \phi_{l,q}$~(or $\psi_{l,q}$).

With the channel defined, the received signal waveform at UE can be represented as:
{\small \begin{align}\label{Tx_Sig_model_2}
\mathbf{y}(t) & = \sum_{l=-\infty}^{+\infty}\mathbf{H}_{l}\mathbf{w}_{[l]_J}s\left(t-(l-1)T_{slot} - \tau_0\right)+\mathbf{z}(t)=\sum_{l=-\infty}^{\infty}\mathbf{h}_ls\left(t-(l-1)T_{slot} - \tau_0\right)+\mathbf{z}(t),
\end{align}}
where $\mathbf{h}_l \triangleq \mathbf{H}_l\mathbf{w}_{[l]_J}\in \mathbb{C}^{N_R \times 1}$ is the effective channel (after transmit beamforming) of slot $l$ and $\tau_0$ is a BS time offset and is unknown to UE initially. The noise $\mathbf{z}(t)$ is assumed to be spatially independent complex Gaussian with zero-mean and an unknown variance $\sigma^2$. In the following, we consider a fully digital receiver at UEs and note that power-efficient digital mm-wave receivers are possible by adopting low-resolution analog-to-digital convertors at UE, with a minimal loss of the received SNR~\cite{barati2014dreictional}. We also note that receiver combining using either analog combiner or hybrid combiner~\cite{alkhateeb2014channel} can be incorporated by \eqref{Tx_Sig_model_2} and thus our subsequent analysis applies. For instance, with a hybrid combiner, $\mathbf{h}_l \in \mathbb{C}^{N_{RF} \times 1}$ is the effective channel after transmit beamforming and receive combining, where $N_{RF}$ is the number of RF chains at the UE; in the case of adopting an analog combiner, the effective channel is a scalar ($N_{RF}=1$).

In the initial access, a UE attempts to detect the presence of $s(t)$ and find the correct synchronisation timer, i.e., $\tau_0$.
We assume that each UE performs detection based on signals collected from $L\geq J$ consecutive slots. Since the RS is broadcast in every slot,
it is sufficient to assume that $\tau_0$ lies in the interval $[0,T_{slot}]$. The UEs therefore only need to examine signals collected at time lags $\tau \in [0, T_{slot}]$ to determine if the RS is detected.

Without loss of generality, we consider that the signals from the observation interval $[\tau, \tau+LT_{slot}]$ are used for detection at time lag $\tau$. Since the RS has a duration $T_{rs}<T_{slot}$, only the signals from subintervals, $[(l-1)T_{slot} + \tau, (l-1) T_{slot} + T_{rs} + \tau]$, $l=1,\ldots,L$, are used to test if there is a reference signal and if the offset is $\tau$. We denote the signals from the $l$th subinterval~as 
\begin{equation}
\mathbf{y}_{l,\tau}(t')\triangleq\mathbf{y}(t+\tau)  = \mathbf{h}_ls(t-(l-1)T_{slot}+\tau-\tau_0) + \mathbf{z}(t),
\end{equation}
where $t' = t-((l-1)T_{slot}+\tau) \in [0,T_{rs}]$. Fig.~\ref{fig:received_model} illustrates the frame structure of the received signal slots used for detection. 

We further suppose that the UE samples the signals received in the observation period at rate $1/T_s$, with $N_{slot}$ samples per slot, where $T_s$ is the duration of a pilot symbol, and that $\tau_0$ is discrete and takes one of the $N_{slot}$ possible values. The discrete-time samples for $\mathbf{y}_{l,\tau}(t')$ are stored in matrix $\mathbf{Y}_{l,\tau} = [\mathbf{y}_{l,\tau}(T_s), \mathbf{y}_{l,\tau}(2T_s), \ldots, \mathbf{y}_{l,\tau}(N_sT_s)] \in \mathbb{C}^{N_R \times N_s}$, where $N_s\leq N_{slot}$ is the number of samples per slot for the RS. All the samples from the $L$ subintervals are finally organised as $\mathbf{Y}_{\tau}=[\mathbf{Y}_{1,\tau}^T,\ldots,\mathbf{Y}_{l,\tau}^T,\ldots,\mathbf{Y}_{L,\tau}^T]^T \in \mathbb{C}^{LN_R \times N_s}$.
The sampled RS is $\mathbf{s} = [s(T_s), s(2T_s), \ldots, s(N_sT_s)]^T\in \mathbb{C}^{N_s \times 1}$. Similar to~\cite{bliss2010temporal}, assuming the channel follows:
{\small\begin{equation}\label{channel_hypothesis}
\mathbf{h}_l(\tau)= \left \{ \begin{array}{ll}  \mathbf{h}_l,  & \tau = \tau_{0}\\
		 							0, & \tau \neq \tau_{0}\,
	\end{array}
\right.
\end{equation}}
the BS detection and synchronisation problem is formulated as the following binary test:
{\small \begin{equation}\label{Test_2}
\left \{ \begin{array}{ll} {\cal H}_1: &\mathbf{Y}_{\tau} =  \mathbf{h}\mathbf{s}^T + \mathbf{Z}_{\tau},\\
		{\cal H}_0: &  \mathbf{Y}_{\tau} = \mathbf{Z}_{\tau}.
	\end{array}
\right.
\end{equation}}
where $\mathbf{Z}_{\tau}\in \mathbb{C}^{LN_R\times N_s}$ follows a similar construction to $\mathbf{Y}_{\tau}$ and $\mathbf{h}=[\mathbf{h}_1^T,\ldots,\mathbf{h}_L^T]^T \in \mathbb{C}^{LN_R\times 1}$. The hypothesis test is repeated for all $N_{slot}$ discrete values of $\tau$.

\begin{figure}[t]
\centering
\includegraphics[width=0.6\textwidth]{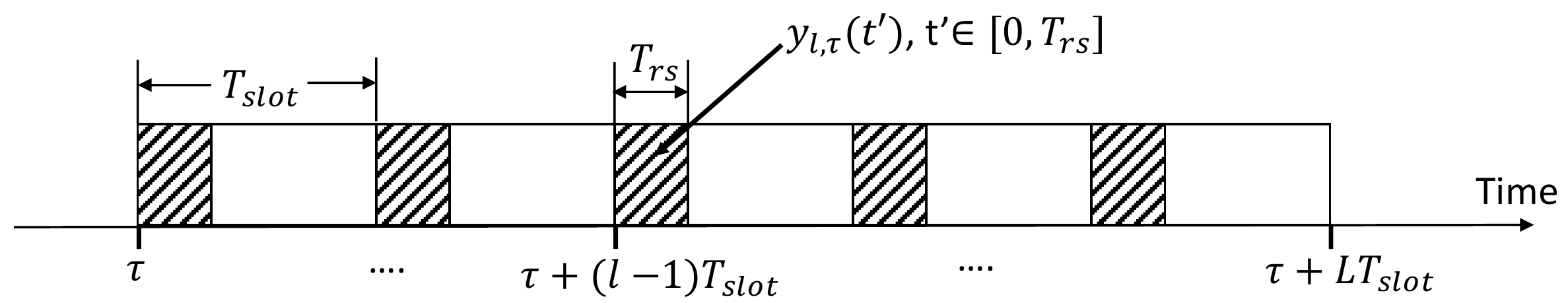}
\caption{Frame structure of the received signal used for detection.}
\label{fig:received_model}
\vspace{-0.6cm}
\end{figure}

\section{Generalised Likelihood Ratio Test for BS Discovery}\label{sec:GLRT}
In the initial access phase, it is reasonable to assume that UEs have no prior knowledge of any parameters in the test~\eqref{Test_2}, including the channel $\mathbf{h}$ and the noise variance $\sigma^2$. For this reason, we employ the GLRT method to perform the hypothesis testing.

The GLRT for this problem is represented as the following:
\begin{equation}\label{GLRT_statistic}
L'_G(\tau) = \frac{\max_{\mathbf{h},\sigma_1^2}p\left(\mathbf{Y}_{\tau}|{\cal H}_1;\mathbf{h},\sigma_1^2 \right)}{\max_{\sigma_0^2}p\left( \mathbf{Y}_{\tau}|{\cal H}_0; \sigma_0^2 \right)} \mathop{\gtrless}_{{\cal H}_0}^{{\cal H}_1} \gamma',
\end{equation}
where $L'_G(\tau)$ is the test statistic, $\gamma'$ is a threshold, $p\left(\mathbf{Y}_{\tau}|{\cal H}_1;\mathbf{h},\sigma_1^2\right)$ is the conditional probability density function~(PDF) of $\mathbf{Y}_{\tau}$ under ${\cal H}_1$ with given channel $\mathbf{h}$ and noise variance $\sigma^2_1$, and $p\left( \mathbf{Y}_{\tau}|{\cal H}_0; \sigma_0^2 \right)$ is the conditional PDF of $\mathbf{Y}_{\tau}$ under ${\cal H}_0$ for a given noise variance $\sigma^2_0$.

To derive the GLRT test statistic $L'_G(\tau)$, we first need to obtain the maximum likelihood~(ML) estimate of $\mathbf{h}$ and $\sigma_1^2$ under ${\cal H}_1$ and the ML estimate of $\sigma_0^2$ under ${\cal H}_0$. We refer derivations for the GLRT test statistic to Appendix~\ref{App_GLRT_detector} and present the results directly in the following proposition.

\begin{proposition}\label{Proposition:GLRT}
The test in \eqref{GLRT_statistic} is equivalent to the the following:
\begin{equation}\label{GLRT_sta_log}
L_G(\tau) = \frac{\sum_{l=1}^L \frac{1}{\|\mathbf{s}\|_2^2}\|\mathbf{Y}_{l,\tau}\mathbf{s}^*\|_2^2}{\sum_{l=1}^L \left(\|\mathbf{Y}_{l,\tau}\|_F^2 - \frac{1}{\|\mathbf{s}\|_2^2}\|\mathbf{Y}_{l,\tau}\mathbf{s}^*\|_2^2\right)} \mathop{\gtrless}_{{\cal H}_0}^{{\cal H}_1}\gamma,
\end{equation}
where $\gamma = \left(\gamma'\right)^{1/N}-1>0$ is the test threshold and $N=N_RN_SL$.
\end{proposition}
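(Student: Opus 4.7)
The plan is to profile out the unknown parameters on both sides of \eqref{GLRT_statistic}. Since $\mathbf{Z}_\tau$ has i.i.d.\ $\mathcal{CN}(0,\sigma^2)$ entries, the log-likelihoods under $\mathcal{H}_1$ and $\mathcal{H}_0$ are, up to constants,
\begin{equation}
-N\log(\pi\sigma_1^2)-\tfrac{1}{\sigma_1^2}\|\mathbf{Y}_\tau-\mathbf{h}\mathbf{s}^T\|_F^2 \qquad \text{and} \qquad -N\log(\pi\sigma_0^2)-\tfrac{1}{\sigma_0^2}\|\mathbf{Y}_\tau\|_F^2,
\end{equation}
with $N=LN_RN_s$. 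The first step is to find $\hat{\mathbf{h}}$ minimising $\|\mathbf{Y}_\tau-\mathbf{h}\mathbf{s}^T\|_F^2$. Viewing this as an ordinary least-squares problem with rank-one regressor $\mathbf{s}^T$, differentiation (or a direct orthogonality argument) gives $\hat{\mathbf{h}}=\mathbf{Y}_\tau\mathbf{s}^*/\|\mathbf{s}\|_2^2$, i.e.\ the projection of the columns of $\mathbf{Y}_\tau$ onto $\mathbf{s}^*$.

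Next, I would invoke the Pythagorean identity in the Frobenius inner product to split the data energy into the signal and residual components:
\begin{equation}
\|\mathbf{Y}_\tau-\hat{\mathbf{h}}\mathbf{s}^T\|_F^2 \;=\; \|\mathbf{Y}_\tau\|_F^2-\tfrac{1}{\|\mathbf{s}\|_2^2}\|\mathbf{Y}_\tau\mathbf{s}^*\|_2^2.
\end{equation}
Profiling the Gaussian likelihoods in $\sigma_1^2$ and $\sigma_0^2$ then yields the standard maximisers $\hat{\sigma}_1^2 = \|\mathbf{Y}_\tau-\hat{\mathbf{h}}\mathbf{s}^T\|_F^2/N$ and $\hat{\sigma}_0^2 = \|\mathbf{Y}_\tau\|_F^2/N$. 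Substituting back, both concentrated likelihoods take the form $(\pi\hat{\sigma}^2)^{-N}e^{-N}$, so the exponential factor cancels and the ratio in \eqref{GLRT_statistic} collapses to $L'_G(\tau)=(\hat{\sigma}_0^2/\hat{\sigma}_1^2)^N$.

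Finally, taking the $N$th root (a monotone transform, which absorbs the threshold into $(\gamma')^{1/N}$), and then subtracting $1$ from both sides to move the residual into the denominator, produces a ratio of "signal energy" over "residual energy". I would then use the block-vertical structure $\mathbf{Y}_\tau=[\mathbf{Y}_{1,\tau}^T,\ldots,\mathbf{Y}_{L,\tau}^T]^T$ to write $\|\mathbf{Y}_\tau\|_F^2=\sum_l \|\mathbf{Y}_{l,\tau}\|_F^2$ and $\|\mathbf{Y}_\tau\mathbf{s}^*\|_2^2=\sum_l\|\mathbf{Y}_{l,\tau}\mathbf{s}^*\|_2^2$, which casts the statistic in the per-slot-sum form of \eqref{GLRT_sta_log} with threshold $\gamma=(\gamma')^{1/N}-1$. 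Positivity $\gamma>0$ follows since a nontrivial GLRT must have $\gamma'>1$.

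There is no serious obstacle: the only mildly delicate step is verifying that the rank-one least-squares estimate is indeed $\hat{\mathbf{h}}=\mathbf{Y}_\tau\mathbf{s}^*/\|\mathbf{s}\|_2^2$ and that the Pythagorean decomposition holds without cross terms, both of which reduce to the orthogonality $\langle\mathbf{Y}_\tau-\hat{\mathbf{h}}\mathbf{s}^T,\hat{\mathbf{h}}\mathbf{s}^T\rangle_F=0$. The rest is algebraic rearrangement, so the proof will appear in Appendix~\ref{App_GLRT_detector} essentially as a deterministic calculation rather than a statistical argument.
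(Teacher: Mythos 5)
Your proposal is correct and follows essentially the same route as the paper's Appendix~A: profile out $\mathbf{h}$ and the noise variances, obtain $L'_G=(\hat{\sigma}_0^2/\hat{\sigma}_1^2)^N$, then take the $N$th root and subtract $1$; the only cosmetic difference is that you justify $\hat{\sigma}_1^2$ via the projection/Pythagorean identity where the paper expands the trace directly, which is the same computation.
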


From~\eqref{GLRT_sta_log}, a UE attempting to discover a BS needs to compute the correlation between a candidate RS sequence, which can be viewed as the energy from the RS, and to calculate the total received energy  $\sum_{l=1}^L \|\mathbf{Y}_{l,\tau}\|_F^2$. Hypothesis ${\cal H}_1$ is accepted when the estimated energy from RS contributes to a significant portion of the total received energy, i.e., when $L_G(\tau) > \gamma$. It can also be seen from~\eqref{GLRT_sta_log} that the performance of the detector depends on system parameters including the length of RS sequence $N_s$ and the UE searching time $L$, and on the choice of the threshold $\gamma$. To analyse the impact of these parameters on the performance of mm-wave BS discovery, we present the following proposition.


\begin{proposition}\label{Proposition_GLRT_sta}
The GLRT test statistic $L_G(\tau) $ has the following statistical properties:
\begin{align} \label{F_distribution}
&(N_s-1)L_G(\tau)\sim \left \{ \begin{array}{ll}
				{\cal F}(2N_RL,2N_RL(N_s-1),0) & \text{under} ~~{\cal H}_0\\
				{\cal F}(2N_RL,2N_RL(N_s-1),\lambda) & \text{under}~~ {\cal H}_1
				\end{array}
			\right.
\end{align}
where
\[\lambda = \frac{2\|\mathbf{s}\|_2^2\sum_{l=1}^L\|\mathbf{h}_l\|_2^2}{\sigma^2} = \frac{2P_TN_s\sum_{l=1}^L\|\mathbf{h}_l\|_2^2}{\sigma^2}\]
and $P_T$ is the average transmit power. Here ${\cal F}(n_1,n_2,\lambda_1)$ denotes non-central F-distribution with degrees of freedom (DoFs) $n_1$ and $n_2$ and noncentrality parameter $\lambda$.
\end{proposition}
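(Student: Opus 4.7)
The plan is to recognise the numerator and denominator of $L_G(\tau)$ as squared Frobenius norms of orthogonal projections acting on $\mathbf{Y}_{l,\tau}$ from the right, and then invoke standard distributional facts about complex Gaussian quadratic forms. Define the rank-one Hermitian projector $\mathbf{P} = \mathbf{s}^{*}\mathbf{s}^{T}/\|\mathbf{s}\|_2^2$ and its complement $\mathbf{P}^{\perp}=\mathbf{I}_{N_s}-\mathbf{P}$. A direct calculation gives $\|\mathbf{Y}_{l,\tau}\mathbf{P}\|_F^2 = \|\mathbf{Y}_{l,\tau}\mathbf{s}^*\|_2^2/\|\mathbf{s}\|_2^2$ (the summand in the numerator) and $\|\mathbf{Y}_{l,\tau}\mathbf{P}^{\perp}\|_F^2 = \|\mathbf{Y}_{l,\tau}\|_F^2 - \|\mathbf{Y}_{l,\tau}\mathbf{s}^*\|_2^2/\|\mathbf{s}\|_2^2$ (the summand in the denominator). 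So the entire problem reduces to identifying the distribution of these two projected quadratic forms and exhibiting their independence.

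Next I would handle the noise-only behaviour that drives both hypotheses. Let $\mathbf{U}$ be any unitary with first column $\mathbf{s}^{*}/\|\mathbf{s}\|_2$. Since the entries of $\mathbf{Z}_{l,\tau}$ are i.i.d.\ $\mathcal{CN}(0,\sigma^{2})$, the rotated matrix $\mathbf{Z}_{l,\tau}\mathbf{U}$ has the same distribution; its first column equals $\mathbf{Z}_{l,\tau}\mathbf{s}^*/\|\mathbf{s}\|_2$ and its remaining $N_s-1$ columns span the complement. Hence $2\|\mathbf{Z}_{l,\tau}\mathbf{P}\|_F^2/\sigma^{2}$ and $2\|\mathbf{Z}_{l,\tau}\mathbf{P}^{\perp}\|_F^2/\sigma^{2}$ are independent central $\chi^{2}$ random variables with $2N_R$ and $2N_R(N_s-1)$ degrees of freedom respectively. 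Independence across $l=1,\dots,L$ (noise is independent across slots) lets me simply add the degrees of freedom, giving $2N_R L$ and $2N_R L(N_s-1)$. The F-ratio identity then yields $(N_s-1)L_G(\tau)\sim\mathcal{F}(2N_RL,2N_RL(N_s-1),0)$ under $\mathcal{H}_0$.

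For $\mathcal{H}_1$ I would exploit the fact that $\mathbf{s}^{T}\mathbf{P}^{\perp}=\mathbf{0}$, so $\mathbf{Y}_{l,\tau}\mathbf{P}^{\perp}=\mathbf{Z}_{l,\tau}\mathbf{P}^{\perp}$ and the denominator retains its central distribution. On the signal side, $\mathbf{Y}_{l,\tau}\mathbf{s}^{*}/\|\mathbf{s}\|_2 = \|\mathbf{s}\|_2\,\mathbf{h}_l + \mathbf{Z}_{l,\tau}\mathbf{s}^{*}/\|\mathbf{s}\|_2$ is complex Gaussian with mean $\|\mathbf{s}\|_2\,\mathbf{h}_l$ and covariance $\sigma^{2}\mathbf{I}_{N_R}$, so $2\|\mathbf{Y}_{l,\tau}\mathbf{P}\|_F^{2}/\sigma^{2}$ is non-central $\chi^{2}$ with $2N_R$ degrees of freedom and non-centrality $2\|\mathbf{h}_l\|_2^{2}\|\mathbf{s}\|_2^{2}/\sigma^{2}$. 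Non-centralities add across independent slots, producing $\lambda = 2\|\mathbf{s}\|_2^{2}\sum_{l=1}^{L}\|\mathbf{h}_l\|_2^{2}/\sigma^{2}$; substituting $\|\mathbf{s}\|_2^{2}=P_T N_s$ recovers the stated expression, and forming the F-ratio with the same scaling factor $(N_s-1)$ finishes the proof.

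The only real obstacle is bookkeeping: one must fix a convention for the complex Gaussian variance (the factor of $2$ in $2|X|^{2}/\sigma^{2}\sim\chi_{2}^{2}$) and apply it consistently to both hypotheses, and one must verify cleanly that the numerator and denominator quadratic forms are genuinely independent. Both points are settled by the unitary change of basis described above, after which the rest is a mechanical assembly of $\chi^{2}$ facts and the definition of the (non-central) F-distribution.
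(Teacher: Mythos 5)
Your proposal is correct and follows essentially the same route as the paper's Appendix B: both split each slot's statistic into the component along $\mathbf{s}^*$ and the component in its orthogonal complement via a unitary change of basis, use rotational invariance of the i.i.d.\ complex Gaussian noise to get independence and the central/non-central $\chi^2$ laws with $2N_R$ and $2N_R(N_s-1)$ degrees of freedom per slot, and then sum over slots and form the F-ratio. The only difference is presentational --- you phrase the decomposition in terms of the projector $\mathbf{P}=\mathbf{s}^*\mathbf{s}^T/\|\mathbf{s}\|_2^2$, whereas the paper expands the quadratic forms explicitly using a unitary matrix whose last column is $\mathbf{s}^*/\sqrt{P_TN_s}$.
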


\begin{proof}
See Appendix~\ref{App_pGLRT}.
\end{proof}

Proposition~\ref{Proposition_GLRT_sta} characterises the distribution of the test statistic $L_G(\tau)$ using key system parameters such as the length of RS sequence $N_s$, the cell search interval $L$ and the transmission power. Since the PDF of the \emph{F}-distribution is known, it is convenient to evaluate the impact of these parameters on the performance of BS discovery. The results obtained can be used in choosing suitable values of these parameters to provide satisfactory BS discovery performance.

In particular, as the probability of false alarm,  i.e., the probability of declaring ${\cal H}_1$ under ${\cal H}_0$, is independent of the effective channels, it is convenient to choose $\gamma$ such that the probability of false alarm is below some target $P_{FA}$.
Recall that a UE needs to examine $N_{slot}$ time lags, corresponding to $N_{slot}$ tests.
 For the $i$th test with a hypothesised time lag $\tau_i$, the false alarm event is denoted as $FA_i = \{L_G(\tau_i) \geq \gamma~\text{under}~{\cal H}_0\}$. Then the probability of false alarm can be represented as $Pr\{\bigcup_{i=1}^{N_{slot}}FA_i\}$. Using the union bound, it can be shown that
\begin{align}
Pr\left\{\bigcup_{i=1}^{N_{slot}}FA_i\right\} &\leq \sum_{i=1}^{N_{slot}} Pr\{FA_i\} = N_{slot} Pr\{FA\}.
\end{align}
It then becomes clear that to meet a false alarm target $P_{FA}$, the test threshold $\gamma$ can be chosen such that the following equation is satisfied
\begin{equation}\label{Eq:threshold}
1-F(\gamma|2N_RL,2N_RL(N_s-1),0)= \frac{P_{FA}}{N_{slot}}
\end{equation}
where $Pr\{FA\} = 1-F(\gamma|2N_RL,2N_RL(N_s-1),0)$ and $F(x|n_1,n_2,\lambda_1)$ is the cumulative distribution function (CDF) of the \emph{F}-distribution ${\cal F}(n_1,n_2,\lambda_1)$.

We now focus on the hypothesis ${\cal H}_1$. Denote the probability of miss-detection (which is the probability of claiming ${\cal H}_0$ under ${\cal H}_1$) as $P_{miss} \triangleq Pr\{L_G(\tau)\leq\gamma|{\cal H}_1\}$. We note that under ${\cal H}_1$, both channel fluctuations and noise randomness can cause miss-detection, since the non-centrality parameter $\lambda$ depends on the effective channel gains. In the following, we denote $\bar{h}_L \triangleq \frac{1}{L}\sum_{l=1}^L\|\mathbf{h}_l\|^2_2$ as the sample mean of the effective channel gain and assume that for any $\xi>0$, there exists an $\underline{h}$ such that $Pr\{\bar{h}_L<\underline{h}\}=\xi$. Based on this assumption, we upper bound $P_{miss}$ in the following lemma that captures both channel fluctuations and noise randomness.

\begin{lemma}\label{lemma_fading}
Denote $\underline{\eta} = \frac{2P_TN_s}{\sigma^2}\underline{h}$. The probability of miss detection is upper bounded:
\begin{equation}\label{eq:upper_bound}
P_{miss}\leq \inf_{0<\xi<1 } \left\{\xi + (1-\xi)F(\gamma|2N_RL,2N_RL(N_s-1),L\underline{\eta})\right\}.
\end{equation}
\end{lemma}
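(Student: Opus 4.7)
The plan is to decompose the miss-detection probability by conditioning on whether the sample-mean effective channel gain $\bar{h}_L$ falls below the threshold $\underline{h}$, and then exploit the known statistical properties of the GLRT test statistic from Proposition~\ref{Proposition_GLRT_sta}.

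First, I would apply the law of total probability with the event $\{\bar{h}_L < \underline{h}\}$ and its complement:
\begin{align*}
P_{miss} &= Pr\{L_G(\tau)\leq\gamma\,|\,\mathcal{H}_1,\bar{h}_L<\underline{h}\}\,Pr\{\bar{h}_L<\underline{h}\}\\
&\quad + Pr\{L_G(\tau)\leq\gamma\,|\,\mathcal{H}_1,\bar{h}_L\geq\underline{h}\}\,Pr\{\bar{h}_L\geq\underline{h}\}.
\end{align*}
By the stated assumption $Pr\{\bar{h}_L<\underline{h}\}=\xi$, the first term is bounded above by $\xi$ (using the trivial bound that probabilities do not exceed~$1$) and the prefactor of the second term equals $1-\xi$.

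Next, for the conditional probability on the event $\{\bar{h}_L\geq\underline{h}\}$, I would invoke Proposition~\ref{Proposition_GLRT_sta}: given a realisation of the effective channels with $\bar{h}_L=h$, the non-centrality parameter is $\lambda(h)=2P_T N_s L h/\sigma^2$, and $(N_s-1)L_G(\tau)$ is non-central $F$-distributed. The key analytic fact I would use is that the CDF $F(\gamma\,|\,n_1,n_2,\lambda_1)$ of a non-central $F$-distribution is monotonically non-increasing in the non-centrality parameter $\lambda_1$ (intuitively, larger $\lambda_1$ shifts mass to the right). Consequently, for every $h\geq\underline{h}$ we have $\lambda(h)\geq L\underline{\eta}$ and therefore
\begin{equation*}
Pr\{L_G(\tau)\leq\gamma\,|\,\mathcal{H}_1,\bar{h}_L=h\} = F\bigl(\gamma\,|\,2N_RL,2N_RL(N_s-1),\lambda(h)\bigr)\leq F\bigl(\gamma\,|\,2N_RL,2N_RL(N_s-1),L\underline{\eta}\bigr).
\end{equation*}
Integrating this uniform bound against the conditional distribution of $\bar{h}_L$ restricted to $[\underline{h},\infty)$ yields the same upper bound for $Pr\{L_G(\tau)\leq\gamma\,|\,\mathcal{H}_1,\bar{h}_L\geq\underline{h}\}$.

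Combining the two pieces gives $P_{miss}\leq \xi + (1-\xi)F(\gamma\,|\,2N_RL,2N_RL(N_s-1),L\underline{\eta})$, and since $\xi\in(0,1)$ was arbitrary (each choice of $\xi$ corresponds via the assumption to a particular $\underline{h}$ and hence $\underline{\eta}$), taking the infimum over $\xi$ yields the claim. The only non-routine ingredient is the monotonicity of the non-central $F$ CDF in the non-centrality parameter; I would either cite it as a standard fact or give a one-line justification via the representation of the non-central $F$ as a ratio involving a non-central chi-squared numerator, whose CDF is decreasing in the non-centrality parameter by a coupling/stochastic-dominance argument.
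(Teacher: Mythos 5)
Your proposal is correct and follows essentially the same route as the paper's own proof: the same decomposition of $P_{miss}$ by conditioning on the event $\{\bar{h}_L<\underline{h}\}$, the trivial bound of the first term by $\xi$, and the monotone decrease of the non-central $F$ CDF in the non-centrality parameter (which the paper cites to~\cite{baharev2008computation}) to bound the second term. If anything, your step of integrating the pointwise bound over the conditional law of $\bar{h}_L$ on $[\underline{h},\infty)$ is slightly more careful than the paper's direct comparison of the two conditional probabilities, but the argument is the same.
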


\begin{proof}
See Appendix~\ref{proof_lemma}.
\end{proof}

Eq.~\eqref{eq:upper_bound} provides a convenient way to assess the probability of miss-detection under different system configurations and can thus provide guidance in choosing system parameters. Fig.~\ref{fig:upper_bound} provides an illustrative example, where the dashed lines are obtained from~\eqref{eq:upper_bound} and the solid lines are obtained from simulations.

In producing Fig.~\ref{fig:upper_bound}, the RS sequence length is fixed at $N_s=100$ and the number of samples in a slot is $N_{slot}=5000$, corresponding to a $2\%$ overhead due to the transmissions of RSs. The test threshold $\gamma$ is determined according to \eqref{Eq:threshold} with $P_{FA}=0.001$. The channels $\mathbf{H}_l$ are generated according to~\eqref{eq:ch_multi_path}, where there is one dominant path with time-invariant path-gain, AoA and AoD. The other $Q-1$ multipath components ($Q=6$) are assumed to have random AoAs and AoDs and that the path-gain coefficients are i.i.d. complex Gaussian. 
The ratio between the energy of the dominant path ($k=1$) and the total energy of the scattering paths ($q=2,\ldots,Q$) is set to $13.2$ dB, e.g., $10\log_{10}\frac{|g_{l,1}|^2}{\sum_{q=2}^Q\mathbb{E}\{|g_{l,q}|^2\}} = 13.2$, according to the measurement results reported in~\cite{muhi2010modelling} and adopted in~\cite{6600706}. Various values of the average SNR (of the RS signal) at each receive antenna, i.e., $SNR_{rs}=\frac{P_T}{N_R\sigma^2}\mathbb{E}\{\bar{h}_L\}$, are considered.
Eq.~\eqref{eq:upper_bound} is evaluated using the value of $\xi$ in the range $[10^{-5},1)$ and the corresponding $\underline{h}$ that yields the minimum value of the upper bound. The values of $\underline{h}$ are obtained numerically.

It can be seen that the theoretical results obtained capture well the behaviour of $P_{miss}$ under different system configurations. In particular, it provides satisfactory accuracy in choosing suitable system parameters to guarantee a certain miss-detection rate. For instance, the theoretical result for $SNR_{rs}= -23$ dB suggests $L=26$ for a target miss-detection rate of $10^{-3}$, which is very close to the value ($L=24$) obtained from simulations.

\begin{figure}[t]
\centering
\includegraphics[width=0.6\textwidth]{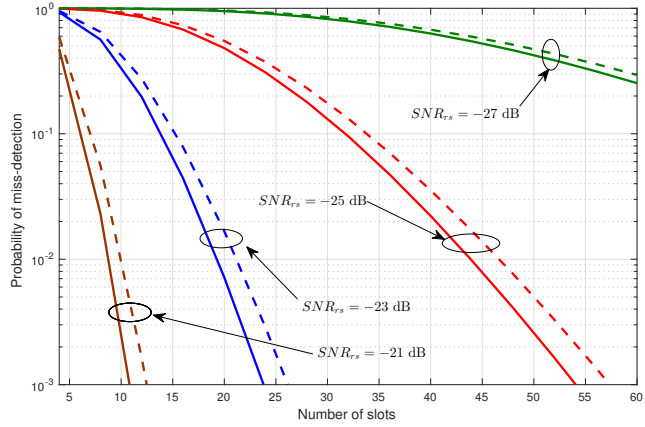}
\caption{Miss-detection probability versus the number of slots ($L$) used for BS discovery: the solid lines are obtained directly from simulation; the dashed lines are obtained by evaluating the upper bound given in \eqref{eq:upper_bound}.}
\label{fig:upper_bound}
\vspace{-1em}
\end{figure}

\begin{remark}
Conditioned on a sequence of channel realisations $\mathbf{H}_l$, $l=1,\ldots,L$, the probability of miss-detection is $F(\gamma|2N_RL,2N_RL(N_s-1),L\eta_L)$, where $\eta_L\triangleq \frac{\lambda}{L} = \frac{2P_TN_s\bar{h}_L}{\sigma^2}$ is the normalised non-centrality parameter depending on the transmit beamformers. Since the CDF of the \emph{F}-distribution, $F(n_1,n_2,\lambda)$ is monotonically decreasing with respect to $\lambda$ when the DoFs $(n_1,n_2)$ are fixed~\cite{baharev2008computation}, $F(\gamma|2N_RL,2N_RL(N_s-1),L\eta_L)$ is monotonically decreasing with respect to the mean channel gain $\bar{h}_L$ when $N_s$ and $L$ are fixed. To enhance the reliability of BS discovery for UEs in a certain direction, the RS transmit beamformers can be steered towards this direction to yield a larger $\bar{h}_L$, or equivalently, a large total energy received within the $L$ slots. However, this comes at the expense of sacrificing the beamforming gain in other directions, leading to poorer BS discovery performance in those directions. Clearly, if universal coverage for all directions is enforced, spatial scanning using narrow beams may not help to improve $\bar{h}_L$ as the total available time, $LN_s$, is allocated to a larger number of beams, each covering a different angular interval. We show next that a good choice is to design RS transmit beamformers such that $\eta_L$ for UEs at all directions are equalised. In the next section, we elaborate on this design.
\end{remark}

\section{Beamforming Design for RS Transmission}\label{sec:main_theorem}
In this section, we design the RS transmit beamformers for a group of UEs within a targeted detectable region and with certain target rates in a mm-wave band. We use the targeted data rates to calculate path loss parameters, which define the coverage topology and are later used to construct the RS transmit beamformers. Specifically, denote $R_{\text{th}}(\phi)$ as the target data rate for UEs at direction $\phi$. The corresponding SNR at data transmission phase, $\text{SNR}_{th}(\phi)$, can be determined according to $R_{\text{th}}(\phi) = \rho W \log_2\left(1+\text{SNR}_{th}(\phi)\right)$, where $\rho$ is the fraction of resource used for downlink data transmission (after excluding overhead) and $W$ is a typical bandwidth for each UE. Further, by assuming that $\text{SNR}_{th}(\phi)$ is achieved when the maximum transmit and receive beamforming gains (denoted as $G_T^{\max}$ and $G_R^{\max}$) are attained, an average path loss can be estimated as $\alpha(\phi) = \frac{P_T}{\sigma^2}\frac{G_T^{\max}G_R^{\max}}{\text{SNR}_{th}(\phi)}\frac{W_{rs}}{W}$, where $W_{rs}$ is the bandwidth used for RS transmission. To provide a universal BS discovery rate for these UEs, we propose to synthesise the RS transmit beamformers $\mathbf{w}_j$, $j=1,\ldots, J$ such that the average beamforming gain matches as close as possible to a desirable pattern:
\begin{equation}\label{Optimal_pattern_condition}
\frac{1}{J}\sum_{j=1}^{J}G_j(\phi) = G^*(\phi) = \left\{
				\begin{array}{ll}
				 \frac{\kappa^*}{\bar{\alpha}}\alpha(\phi), & \forall~\phi \in \Omega\\
				0. & \text{otherwise}
				\end{array}
		    \right.
\end{equation}
where $G_j(\phi)$ is the transmit beamforming gain of $\mathbf{w}_j$ in direction $\phi$ and $G^*(\phi)$ is the desirable beam pattern. Here $\bar{\alpha} \triangleq \frac{1}{|\Omega|}\int_{\Omega}\alpha(\phi)d\phi$  and $\kappa^* = \frac{4\pi}{|\Omega|}$ is a gain factor achieved when  beamformers $\mathbf{w}_j$, $j=1,\ldots,J$ all have zero energy leakage outside $\Omega$, i.e., $\frac{1}{4\pi}\int_{\Omega}G^*(\phi)d\phi =1$. Notation $|\Omega|$ denotes the solid angle spanned by angular interval $\Omega$.

We note that the target data rate $R_{\text{th}}(\phi)$ is defined as a function of direction $\phi$, and so is the path loss parameter $\alpha(\phi)$. The angular dependence is introduced to capture the possibility that the coverage of a mm-wave BS is circularly asymmetric due to uneven blockage across different directions, therefore UEs in some directions may be much closer to the BS and can achieve higher data rates than in other directions. In practice, this coverage information may not be immediately available at the BS.  So a uniform $R_{\text{th}}(\phi)$ over all directions may be adopted initially. By keeping the record of UEs' data rates at different directions (according to transmit beamforming directions), it is possible for the BS to acquire this coverage information. It is also noted that different choices of $R_{\text{th}}(\phi)$ can reflect different possible targeted UE groups. For instance, when $R_{\text{th}}(\phi)$ is set low, worst-case UEs at the coverage edge are considered; when $R_{\text{th}}(\phi)$ is set high, UEs closer to the BS (good UEs or moderate UEs) are considered.

In what follows, we first focus on a simplified, yet important, rank-one channel model, which captures well line-of-sight scenarios with a dominant path, and show that any set of beamformers that do not satisfy (14) provide only suboptimal miss-detection performance as $L$ goes large. We later present a systematic method to synthesise the beamformers to approach $G^*(\phi)$.

\subsection{Error exponent analysis of miss-detection probability as $L \rightarrow \infty$}
We consider the number of slots used for BS discovery is $L = KJ$ and the channels do not change during
the searching interval. Here $K \in {\mathbb N}$ is a scaling factor we used in the subsequent asymptotic analysis (e.g., when $K\rightarrow \infty$, it follows that $L \rightarrow \infty$). The normalised non-centrality parameter, $\eta(\phi)$, for a UE at direction $\phi$ can be written as:
\begin{equation}\label{eta_phi}
\eta(\phi) \triangleq\frac{\lambda(\phi)}{L} =\frac{1}{KJ} \frac{2P_TN_RN_s}{\alpha(\phi)\sigma^2}\sum_{k=1}^K\sum_{j=1}^{J}G_j(\phi)= \frac{2P_TN_RN_s}{\alpha(\phi)\sigma^2}G(\phi),
\end{equation}
where $G(\phi) = \frac{1}{J}\sum_{j=1}^{J}G_j(\phi)$ is the average beamforming gain in direction $\phi$. The probability of miss-detection for edge-UEs at direction $\phi$ can be represented as
\begin{equation}
p_{miss}(\phi)=F(\gamma|2N_RL,2N_RL(N_s-1),L\eta(\phi)).
\end{equation}

For assessing the overall performance of BS discovery, we consider the average miss-detection probability
\begin{equation}
\bar{p}_{miss} = \mathbb{E}_{\phi \in \Omega} \{p_{miss}(\phi)\} =\int_{\Omega} p_{miss}(\phi)p(\phi)d\phi
\end{equation}
as the main performance metric, where $p(\phi)$ is the PDF of the directions of the UEs being considered with $\int_{\Omega}p(\phi)d\phi=1$.

In the following, by applying the Gartner-Ellis theorem~\cite[Chapter 2.3, pp. 43]{dembo2009large}, we establish the large deviation principle of the test statistic $L_G$ when $\eta$ is fixed, and then characterise the asymptotic behaviour of $\bar{p}_{miss}$.
\begin{proposition}\label{Proposition:LDP}
The probability of miss-detection $p_{miss}=F(\gamma|2N_RL,2N_RL(N_s-1),L\eta)$ satisfies
\begin{align}\label{LDP:L}
\lim_{L\uparrow \infty} -\frac{1}{L}\log p_{miss} = \left\{\begin{array}{ll}
															I^*(\eta,\gamma), & \gamma<\frac{2N_R+\eta}{2N_R(N_s-1)} \\
															0, & \text{otherwise}
															\end{array}
															\right.
\end{align}
where $I^*(\eta,\gamma)$ is the rate function given by:
{\small
\begin{align}\label{rate_L}
I^*(\eta,\gamma) =& \frac{\eta}{2} \left(1-\frac{\gamma v^*}{N_R + \sqrt{N_R^2+\gamma \eta v^*}}\right) + N_R(N_s-1)\log \frac{2N_R(N_s-1)}{v^*} - N_R\log \frac{\gamma v^*}{N_R + \sqrt{N_R^2+\gamma \eta v^*}}.
\end{align}
}
Here $v^*= \frac{x^{*2}-N_R^2}{\eta\gamma}$ and $x^*>0$ is a solution to the following equation:
{\small
\begin{equation}
\frac{\gamma+1}{\eta\gamma}(x^2-N_R^2) - x - N_R -2N_R(N_s-1) = 0.
\end{equation}
}
Moreover, when $\gamma<\frac{2N_R+\eta}{2N_R(N_s-1)}$, $I^*(\eta,\gamma)$ is monotonically increasing in $\eta$.
\end{proposition}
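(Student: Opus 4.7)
My plan is to apply the Gartner-Ellis theorem to the tail probability $p_{miss} = \Pr\{L_G \leq \gamma\}$. Using Proposition~\ref{Proposition_GLRT_sta}, under $\mathcal{H}_1$ I represent $L_G = U/V$ with $U \sim \chi^2_{2N_RL}(L\eta)$ and $V \sim \chi^2_{2N_RL(N_s-1)}$ independent, so that $p_{miss} = \Pr\{Z_L \leq 0\}$ where $Z_L := (U - \gamma V)/L$. The mean $\mathbb{E}[Z_L] = 2N_R + \eta - 2N_R(N_s-1)\gamma$ is positive iff $\gamma < (2N_R+\eta)/(2N_R(N_s-1))$; this is exactly the regime in which $\{Z_L \leq 0\}$ is a genuine large-deviation event, while outside it the event contains the typical values of $Z_L$ and $p_{miss}$ does not decay exponentially, yielding the zero-rate branch in~\eqref{LDP:L}.

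For the non-trivial regime, the closed-form MGFs of central and non-central $\chi^2$ give
\begin{equation}
\Lambda(\theta) := \lim_{L\to\infty}\tfrac{1}{L}\log\mathbb{E}\bigl[e^{-\theta L Z_L}\bigr] = -N_R\log(1+2\theta) - N_R(N_s-1)\log(1-2\theta\gamma) - \frac{\eta\theta}{1+2\theta},
\end{equation}
which is smooth on $\theta \in (-\tfrac{1}{2},\tfrac{1}{2\gamma})$ and diverges at both endpoints. Hence $\Lambda$ is essentially smooth and Gartner-Ellis yields $\lim_{L\to\infty}-\tfrac{1}{L}\log p_{miss} = -\min_{\theta\geq 0}\Lambda(\theta)$. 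Convexity of $\Lambda$ together with $\Lambda'(0) = 2N_R(N_s-1)\gamma - 2N_R - \eta < 0$ under the stated condition places the minimizer $\theta^*$ strictly in the interior.

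To match the explicit formula~\eqref{rate_L}, I would make the change of variables $a := 1/(1+2\theta^*)$ and $v^* := 2N_R(N_s-1)/(1-2\theta^*\gamma)$. The stationarity condition $\Lambda'(\theta^*)=0$ reduces to the quadratic $\eta a^2 + 2N_R a = \gamma v^*$, whose positive root is $a = \gamma v^*/(N_R + \sqrt{N_R^2 + \eta\gamma v^*})$. Setting $x^* := \sqrt{N_R^2 + \eta\gamma v^*}$ gives $v^* = (x^{*2}-N_R^2)/(\eta\gamma)$ immediately. Separately, inverting the definition of $v^*$ also yields $a = \gamma v^*/(v^*(1+\gamma) - 2N_R(N_s-1))$; equating the two expressions for $a$ forces $N_R + x^* = v^*(1+\gamma) - 2N_R(N_s-1)$, which after substitution for $v^*$ is exactly the scalar equation for $x^*$ stated in the proposition. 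Substituting $1+2\theta^* = 1/a$, $\eta\theta^*/(1+2\theta^*) = \eta(1-a)/2$, and $1-2\theta^*\gamma = 2N_R(N_s-1)/v^*$ into $-\Lambda(\theta^*)$ reproduces the three-term closed-form expression for $I^*(\eta,\gamma)$ in~\eqref{rate_L} term by term.

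For the monotonicity claim, the envelope theorem applied to $\Lambda(\theta,\eta)$ at the minimizer $\theta^*(\eta)$ gives
\begin{equation}
\frac{d I^*}{d\eta} = -\frac{\partial \Lambda}{\partial \eta}\bigl(\theta^*(\eta),\eta\bigr) = \frac{\theta^*}{1+2\theta^*},
\end{equation}
which is strictly positive since $\theta^* > 0$ in the regime of interest. The step I expect to require the most care is the change of variables in the third paragraph: verifying cleanly that the stationarity equation for $\theta^*$ transforms into the stated quadratic in $x^*$ and that back-substitution reproduces $I^*(\eta,\gamma)$ exactly as written; the remaining steps are standard invocations of the Gartner-Ellis theorem and the envelope theorem.
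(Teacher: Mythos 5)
Your proof is correct and lands on exactly the same equations as the paper, but it gets there by a genuinely different (dual) route. The paper applies the two-dimensional Gartner--Ellis theorem to the pair $(U_L,V_L)=(\tilde U/L,\tilde V/L)$, obtains the joint rate function $I_L(u,v)$ by a Legendre transform, and then solves the constrained minimisation $\inf_{u\leq\gamma v}I_L(u,v)$ via KKT conditions (see \eqref{optimisation_rate}); you instead collapse the miss-detection event to the scalar statistic $Z_L=(U-\gamma V)/L$ and run a one-dimensional Cram\'er/Gartner--Ellis argument, so the whole computation reduces to minimising a single convex function $\Lambda(\theta)$ over one tilting parameter. The two are Legendre-dual: your $\theta^*$ is $-t_1^*=t_2^*/\gamma$ in the paper's notation, your stationarity condition $\eta a^2+2N_Ra=\gamma v^*$ combined with the definition of $v^*$ reproduces \eqref{eq_L_x} exactly, and back-substitution gives \eqref{rate_L} term by term, as you verified. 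What your route buys is (i) a cleaner identification of the threshold $\gamma<\frac{2N_R+\eta}{2N_R(N_s-1)}$ as simply the sign of $\mathbb{E}[Z_L]$, i.e.\ of $\Lambda'(0)$, and (ii) a shorter monotonicity proof: the envelope theorem gives $\frac{dI^*}{d\eta}=\frac{\theta^*}{1+2\theta^*}>0$ directly from $\theta^*>0$ (which follows from convexity of $\Lambda$ and $\Lambda'(0)<0$), whereas the paper must separately establish $v^*>2N_R(N_s-1)$ in \eqref{condition:v} through a sign analysis of the quadratic; the two derivative expressions agree since $\frac{\theta^*}{1+2\theta^*}=\frac{1}{2}(1-a)$ matches \eqref{derivative_I}. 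What the paper's two-dimensional formulation buys is the explicit joint rate function $I_L(u,v)$, which makes the degenerate case (unconstrained minimiser $(2N_R+\eta,\,2N_R(N_s-1))$ already feasible, hence zero exponent) transparent. The only places where your sketch is terser than a full proof are the standard ones: checking essential smoothness/steepness of $\Lambda$ at the endpoints $\theta\downarrow-\tfrac12$ and $\theta\uparrow\tfrac{1}{2\gamma}$ so that the Gartner--Ellis lower bound matches the Chernoff upper bound, and noting that $\inf_{z<0}I(z)=I(0)$ by continuity and convexity of the one-dimensional rate function; both go through without difficulty.
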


\begin{proof}
See Appendix \ref{proof_LDP}.
\end{proof}

\begin{proposition}\label{proposition_optimal_beam}
Let $G(\phi)$ be the average beamforming gain of a set of $J$ beamformers used for RS transmission and let $\eta(\phi)$ be the normalised non-centrality parameter defined by \eqref{eta_phi}. If there exists
  $ \Omega^{-} \subset\Omega$ and $\eta^{-}$ such that $\eta(\phi)\leq \eta^{-}$ for all $\phi \in \Omega^{-}$ and $\int_{\Omega^-}p(\phi)d\phi>0$, the average miss-detection probability $\bar{p}_{miss}$ satisfies:
\begin{equation}
{\small \lim_{L\uparrow \infty}-\frac{1}{L}\log \bar{p}_{miss}\leq I^*(\eta^{-},\gamma), ~\text{if}~\gamma<\frac{2N_R+\eta^-}{2N_R(N_s-1)}}
\end{equation}
where $I^*(\eta,\gamma)$ is the rate function given by \eqref{rate_L}.
\end{proposition}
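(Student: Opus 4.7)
The plan is to lower bound $\bar{p}_{miss}$ by restricting the integration to the sub-region $\Omega^-$ and then invoke the pointwise error exponent established in Proposition~\ref{Proposition:LDP}. The key structural fact is that the non-central $F$-CDF $F(\gamma\mid n_1,n_2,L\eta)$ is monotonically decreasing in the non-centrality parameter $\eta$ when the degrees of freedom are fixed (as already invoked in the preceding remark), so directions with small $\eta(\phi)$ translate into large pointwise miss-detection probability.

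Operationally, I would proceed in three short steps. First, since $p_{miss}(\phi)\geq 0$ everywhere, discard the contribution outside $\Omega^-$ and use $\eta(\phi)\leq \eta^-$ on $\Omega^-$ together with the monotonicity above to replace $\eta(\phi)$ by $\eta^-$ inside the CDF, producing
\[
\bar{p}_{miss}\;\geq\;\int_{\Omega^-}F\bigl(\gamma\mid 2N_RL,2N_RL(N_s-1),L\eta(\phi)\bigr)\,p(\phi)\,d\phi\;\geq\;F\bigl(\gamma\mid 2N_RL,2N_RL(N_s-1),L\eta^-\bigr)\cdot P^-,
\]
where $P^-\triangleq \int_{\Omega^-}p(\phi)\,d\phi>0$ by hypothesis. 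Second, applying the order-reversing map $-\frac{1}{L}\log(\cdot)$ yields
\[
-\frac{1}{L}\log \bar{p}_{miss}\;\leq\;-\frac{1}{L}\log F\bigl(\gamma\mid 2N_RL,2N_RL(N_s-1),L\eta^-\bigr)\;-\;\frac{1}{L}\log P^-.
\]
Third, letting $L\uparrow\infty$, the term $-\frac{1}{L}\log P^-$ vanishes because $P^-$ is a strictly positive constant independent of $L$, while the first term on the right-hand side converges to $I^*(\eta^-,\gamma)$ by Proposition~\ref{Proposition:LDP}, provided the standing hypothesis $\gamma<\frac{2N_R+\eta^-}{2N_R(N_s-1)}$ holds. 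The claim follows.

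I do not anticipate any serious technical obstacle: the proof is essentially a one-paragraph corollary of Proposition~\ref{Proposition:LDP} that pulls the worst-covered sub-region out of the angular average and exploits monotonicity of the non-central $F$-CDF. The main conceptual content, and what motivates the design criterion~\eqref{Optimal_pattern_condition}, is that any persistent trough of $\eta(\phi)$ on a set of positive $p$-measure caps the achievable error exponent of $\bar{p}_{miss}$; hence the large-$L$ optimal strategy should equalise $\eta(\phi)$ across the intended region $\Omega$.
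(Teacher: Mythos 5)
Your proposal is correct and follows essentially the same route as the paper's own proof: lower-bounding $\bar{p}_{miss}$ by restricting the integral to $\Omega^-$, invoking monotonicity of the noncentral $F$-CDF in the noncentrality parameter to replace $\eta(\phi)$ by $\eta^-$, and then applying Proposition~\ref{Proposition:LDP} after the constant $-\frac{1}{L}\log P^-$ vanishes in the limit. No gaps.
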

\begin{proof}
See Appendix~\ref{proof_proposition3}.
\end{proof}

Proposition~\ref{Proposition:LDP} demonstrates that an approximated upper bound to the miss-detection probability  can be obtained as follows:
\begin{equation}\label{Eq:approx}
{\small p_{miss}\approx\left\{ \begin{array}{ll}
                                e^{-LI^*(\eta,\gamma)}, & \gamma<\frac{2N_R+\eta}{2N_R(N_s-1)} \\
                                1,&\text{otherwise}.
                                \end{array}
                              \right.}
\end{equation}
Although the approximation provided in \eqref{Eq:approx} ignores all sub-exponential terms (thus the approximation does not coincide with the true $p_{miss}$ as $L$ increases), we have found that it provides satisfactory characterisation of the behaviour of $p_{miss}$, under various values of $\eta$, as shown in Fig.~\ref{fig:threshold_behaviour}. (In producing Fig.~\ref{fig:threshold_behaviour}, the value of $\eta$ is varied by considering different values of $SNR_{rs}$. Other system parameters are the same as those for Fig.~\ref{fig:upper_bound}.) In particular, for the range of $L$ for which $p_{miss}$ is small (e.g., $p_{miss}<10^{-2}$), it can be seen that \eqref{Eq:approx} provides an accurate approximation to the slope of $p_{miss}$ as $L$ increases. Moreover, a larger $\eta$ leads to a steeper slope, which is consistent with the monotonicity of $I^*(\eta,\gamma)$ with respect to $\eta$.

\begin{figure}[t]
\centering
\includegraphics[width=0.6\textwidth]{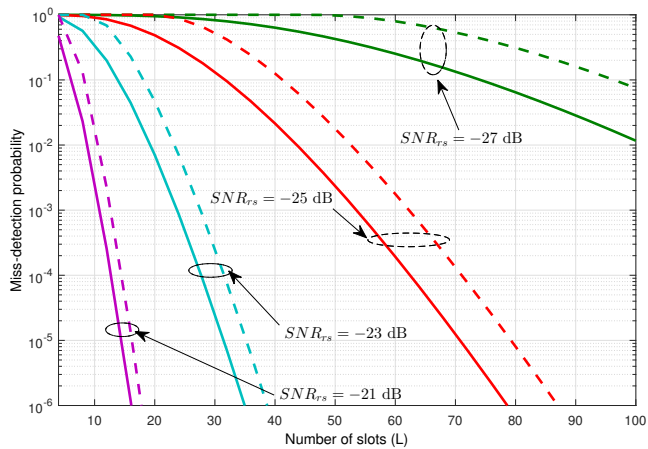}
\caption{Miss-detection probability versus the number of slots ($L$) used for BS discovery: the dashed lines are obtained by applying the approximation in \eqref{Eq:approx}.}
\label{fig:threshold_behaviour}
\vspace{-0.6 cm}
\end{figure}

Proposition~\ref{proposition_optimal_beam} further demonstrates that the asymptotic behaviour of $\bar{p}_{miss}$ is dominated by those directions with the smallest normalised non-centrality parameter. RS transmit beamformers that maximises the minimum possbile $\eta^{-}$ yield the steepest slope, and thus will outperform beamformers with a smaller $\eta^{-}$ when $L$ is sufficiently large. This insight motivates designing RS transmit beamformers by solving the following max-min optimisation problem:
\begin{align}\label{max_min}
\max_{G(\cdot)}\min_{\phi \in \Omega} \eta(\phi),~\text{s.t.}~\frac{1}{4\pi}\int_{\Omega}G(\phi)d\phi \leq 1,
\end{align}
where the constraint follows from the law of energy conservation.
When $N_R$, $N_s$ and $P_T$ are fixed, \eqref{max_min} is equivalent to:
\begin{align}\label{max_min_G}
\max_{G(\cdot)}\min_{\phi \in \Omega} \frac{G(\phi)}{\alpha(\phi)},~\text{s.t.}~\frac{1}{4\pi}\int_{\Omega}G(\phi)d\phi \leq 1
\end{align}
It is easy to check that $G^*(\phi)$ given by \eqref{Optimal_pattern_condition} is a solution to \eqref{max_min_G}.

\begin{remark}
The zero-leakage outside the angular interval is imposed when $\Omega$ is a subinterval of the entire angular space, i.e., $|\Omega|<4\pi$. When this constraint is imposed, the desired pattern \eqref{Optimal_pattern_condition} may not be achieved exactly with a finite number of transmit antennas. 
Moreover, the requirement of matching the beamforming gain to $\alpha(\phi)$ poses an additional challenge in getting a good approximation to \eqref{Optimal_pattern_condition} when $\alpha(\phi)$ is uneven across $\Omega$. Driven by these observations, in the following, we propose a beamformer synthesis strategy that uses $M\leq J$ beamformers to approximate \eqref{Optimal_pattern_condition}. The proposed method is to select between several candidate values of $M$ the one that provides the best approximation to \eqref{Optimal_pattern_condition}.
\end{remark}

\subsection{Beamforming strategy for RS transmission}
We propose to approximate \eqref{Optimal_pattern_condition} based on a beamforming codebook of size $M\leq \min\{N_T,J\}$. The $M$ beamformers are used in sequence to transmit the RS during the $J$ consecutive slots. Since $M\leq J$, some of the $M$ beamformers could be used in more than one slot.
We use the following three-step procedure to construct the beamformers and perform slot allocation:
\begin{enumerate}
\item The whole intended angular space is partitioned into $M$ non-overlapping subintervals: $\Omega^{(m)}$, $m=1,\ldots,M$, $\bigcup_{m=1}^M\Omega^{(m)} = \Omega$ and $\Omega^{(m_1)}\cap\Omega^{(m_2)}=\emptyset$, $\forall m_1\neq m_2$.
\item Beamformer $\mathbf{w}^{(m)}$, $m=1,\ldots,M$ is synthesised to approximate the following pattern:
\begin{equation}\label{Optimal_pattern}
\small{G^{(m)}(\phi) = \left\{
				\begin{array}{ll}
				\kappa^{(m)} \frac{\alpha(\phi)}{\bar{\alpha}^{(m)}}, & \forall~\phi \in \Omega^{(m)}\\
				0, & \text{otherwise}
				\end{array}
		    \right.}
\end{equation}
where $\kappa^{(m)} = \frac{4\pi}{|\Omega^{(m)}|}$ and $\bar{\alpha}^{(m)} = \frac{1}{|\Omega^{(m)}|}\int_{\Omega^{(m)}}\alpha(\phi)d\phi$.
\item Beamformers $\mathbf{w}^{(m)}$s are used in sequence to transmit the RS, i.e., during the $J$ consecutive slots, with $J_m$ slots using beamformer  $\mathbf{w}^{(m)}$, where
\begin{equation}\label{Eq:number_beamformer}
\small{J_m = \frac{J|\Omega|^{(m)}\bar{\alpha}^{(m)}}{M|\Omega|\bar{\alpha}}.}
\end{equation}
\end{enumerate}
\vspace{-1.2em}
As an example, let us consider the scenario presented in Fig.~\ref{fig:system_model} (b), where the $60^\circ$ sector is partitioned into $M=4$ sub-intervals, i.e., $|\Omega^{(m)}| = |\Omega|/4$, $m=1,\ldots,4$. Further, assuming $\alpha(\phi) = \bar{\alpha}$, $\forall \phi \in \Omega$, the desired beamformer pattern in \eqref{Optimal_pattern} is uniform within each sub-interval with zero-leakage outside the interval, as illustrated in Fig.~\ref{fig:construct_beam_example}. Using \eqref{Eq:number_beamformer}, it can be obtained that $J_m = \frac{J}{4}$, $m=1,\ldots,4$. This implies that one can sequentially transmit the RS using the $M=4$ beamformers for $4$ consecutive slots, and repeat for $J/4$ times to meet \eqref{Optimal_pattern_condition} (see Fig.~\ref{fig:construct_beam_example} (b)).

 \begin{figure}[t]
\centering
\begin{subfigure}
  \centering
  \includegraphics[width=0.3\linewidth]{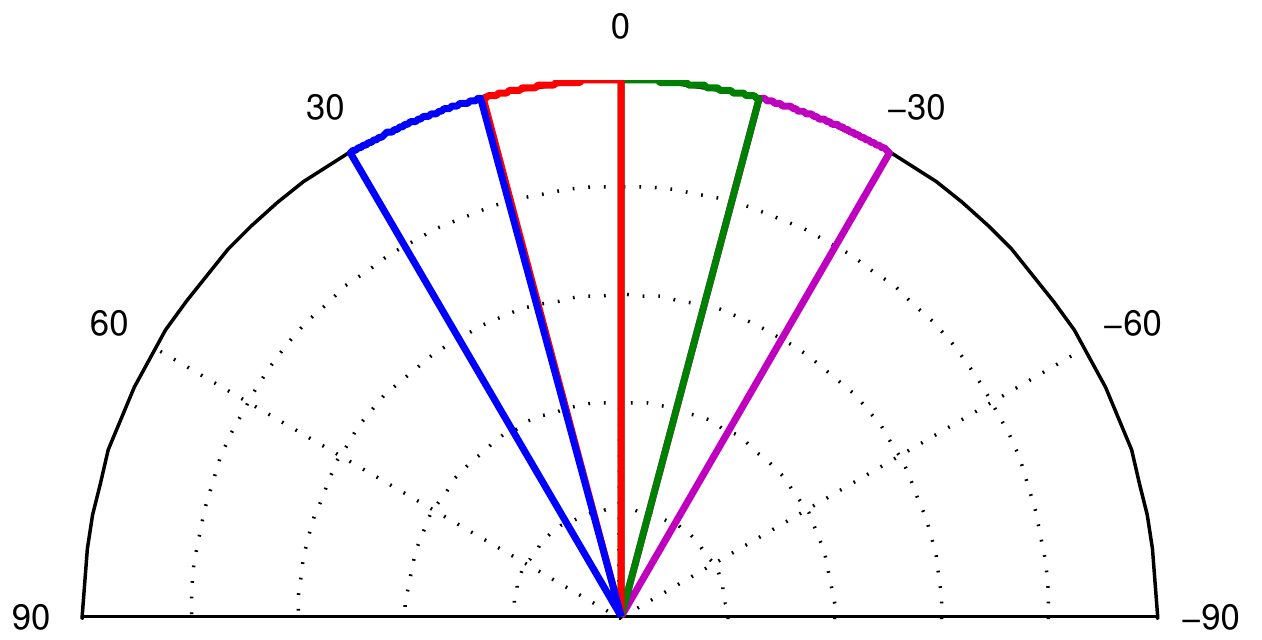}
  \\
  (a)
\end{subfigure}
\\
\begin{subfigure}
  \centering
  \includegraphics[width=0.45\linewidth]{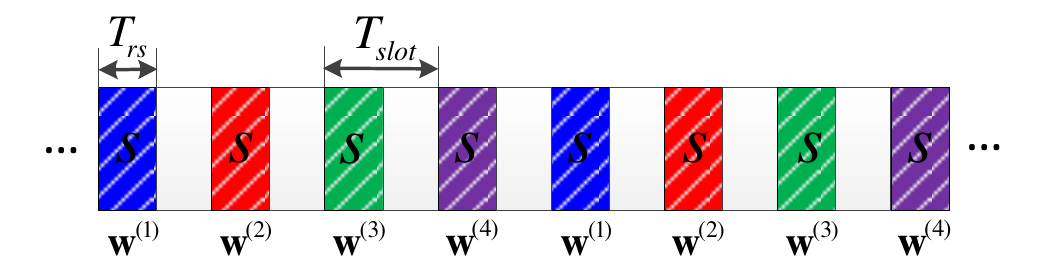}
  \\
  (b)
\end{subfigure}
\caption{An illustration of the 3-step procedure for designing RS transmission strategy using beamformers: (a) Partitioning of sub-intervals with $M=4$ and the desired patterns in~\eqref{Optimal_pattern}; (b) Frame allocations using the $4$ beamformers for RS transmission.}
\label{fig:construct_beam_example}
\vspace{-1em}
\end{figure}

The proposed procedure offers flexibility in synthesising the beamformers to provide the best approximation to $G^*(\phi)$, by choosing different angular partitionings (the $\Omega_m$s and the value of $M$). Such a flexibility is beneficial when $\alpha(\phi)$ is not uniform in $\phi$,
since by slicing the whole angular interval into $M>1$ sub-intervals, a smaller variation of the pathloss $\alpha(\phi)$ can be achieved within each sub-interval and a better approximation to the desired pattern can be achieved using practical beam synthesis techniques.

Even if $\alpha(\phi)$ is uniform such that an omnidirectional transmission within $\Omega$ is desirable, the offered flexibility is beneficial. For instance, consider the two sets of beamformers illustrated in  Fig.~\ref{fig:system_model} (b) and (c), with $M=4$ and $M=1$, respectively. In principle, they can achieve comparable approximations to \eqref{Optimal_pattern_condition} and thus comparable BS discovery performance if appropriate hardware resources are available. However, with limited hardware resources and imperfect approximations to \eqref{Optimal_pattern_condition}, their real performance can be significantly different. As we will show later in Section~\ref{sec:simulation}.~C, the proposed strategy allows us to design different codebooks to attain the best BS discovery performance under different hardware limits such as per-antenna power constraints.

\section{Numerical Results}\label{sec:simulation}
In this section, we demonstrate the performance of BS discovery under the beamforming strategy proposed in Section~\ref{sec:main_theorem}. Throughout this section, ULAs with $N_T = 32$  and $N_R=16$ omnidirectional antenna elements  at the BS and at the UE are considered. The intended coverage angular space at the BS is a $60^\circ$ sector with $\phi \in [-30^{\circ}, 30^{\circ}]$. Antenna spacing at both the BS and UEs is fixed at half a wavelength of the carrier frequency.

We consider two different coverage topologies. For the first topology, the entire  $60^\circ$ sector is open without blockage in all directions, therefore $R_{\text{th}}(\phi)$ is set to a constant for all directions: $R_{\text{th}}(\phi) = R_{\text{th}}$. Parameter $\alpha$ is then calculated as described in Section IV, e.g., $\alpha = \frac{P_T}{\sigma^2}\frac{G_T^{\max}G_R^{\max}}{\text{SNR}_{th}}\frac{W_{rs}}{W}$. For the second topology, we consider that half to the $60^\circ$ sector, e.g., $\phi \in [-30^\circ,0^\circ]$, is blocked by a nearby building, as illustrated in Fig.~\ref{fig:pathloss}. The coverage range of these blocked directions is thus shorter than that of the open directions. For simplicity, $\alpha(\phi)$ for the blocked directions is set to $\alpha/2$.
Multipath fading channels are considered in our simulation and are generated using the same parameters for Fig.~\ref{eq:upper_bound} (See Section III). In particular, we consider that the AoDs from the BS are drawn uniformly in $[-30^{\circ}, 30^{\circ}]$, and that $\sum_{q=1}^{Q}\mathbb{E}\{|g_{l,q}|^2\}=1/\alpha(\phi)$. Details of the relevant system parameters are provided in Table~\ref{Tab:paramter}.

\begin{figure}[t]
\centering
\includegraphics[width=0.4\textwidth]{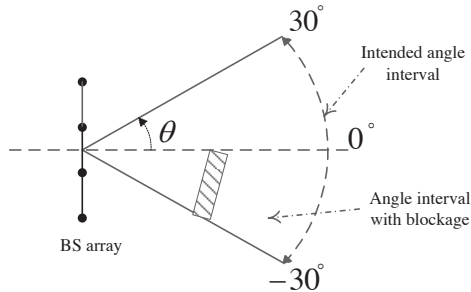}
\caption{A $60^\circ$ sector: half of the angular coverage interval has blockage.}
\label{fig:pathloss}
\vspace{-1 em}
\end{figure}

\begin{table}[t]\centering
\caption{Simulation parameters}\label{Tab:paramter}
\small{
\begin{tabular}{|l|l|}
\hline
Duration of  RS:   & $T_{rs} =10 \mu s$ \\
\hline
 Duration of slot:  &  $T_{slot} = 0.5 ms$\\
 \hline
 RS Bandwidth:  & $W_{rs} = 10$ MHz\\
\hline
Sampling interval: & $T_s = 1/W_{rs}$ \\
\hline
 Data bandwidth: &  $W=1$ GHz\\
 \hline
 Minimum data-rate:  & $R_{th}=10$ Mbps \\
\hline
 False alarm target: &  $P_{FA} = 0.001$\\
\hline
Fraction of time for downlink data trans. &  $\rho=0.4$\\
\hline
Maximum beamforming Gain & $G_T^{\max}=N_T$, $G_R^{\max}=N_R$\\
\hline
\end{tabular}
}
\vspace{-1em}
\end{table}

\subsection{Beam pattern synthesis methods}\label{sec:beamforming}
Various techniques may be used to synthesise beamformers to approximate the ones defined in \eqref{Optimal_pattern}.
In one category of such techniques, only the phase of the signal at each antenna port is controlled~\cite{abu2014synchronization,6600706,Xiao2016,raghavan2016beamforming}. The transmitted signals across all antennas thus have a constant modulus. The other category controls both the phase and magnitude of the signal emitted from each antenna~\cite{alkhateeb2014channel, lebret1997antenna} and as such, beamformers constructed are expected to provide better approximations.  However, extra hardware cost is required to implement such beamformers. In this subsection, we present beamformer synthesis methods for both the categories, namely, \emph{Constant modulus}~(CM) and \emph{Variable modulus}~(VM), which will be used in the subsequent parts of our numerical experiments. The main purpose, though, is not to promote any specific beam synthesis method; rather, it is to demonstrate the effectiveness of the proposed beam synthesis procedure under different hardware cost constraints.

We adopt the optimisation-based approach in~\cite{abu2014synchronization} and \cite{raghavan2016beamforming} to synthesise CM beamformers to approximate the desired beam patterns. The method in~\cite{abu2014synchronization} minimises a weighted mismatch between the desired pattern and the synthesised pattern by optimising phase coefficients of the antennas, while the method in~\cite{raghavan2016beamforming} aims to maximise the minimum beamforming gain within a desirable interval. Both the two methods are adopted because of their flexibility in controlling the beamwidth and their excellent beam broadening ability.

To flexibly control the beamwidth, in synthesising the VM beamformers, we adopt the same optimisation objective as in~\cite{abu2014synchronization}, and optimise both the phases and the magnitudes of the antennas. The beamformers to be optimised takes the form of $\mathbf{w}^{(m)} = \left[w_m(1)e^{i\varphi(1)},\ldots, w_m(N_T)e^{i\varphi(N_T)}\right]^T$, where $w_m(n)$ are real-valued variables such that $||\mathbf{w}^{(m)}||_2^2=1$. An extra constraint, i.e., $w_m(n) = w_m(N_T-n-1)$, is imposed such that the synthesised pattern is symmetric with respect to the centre of the beam and to reduce the dimension of the optimisation. A genetic algorithm is applied to solve the corresponding global optimsation problems~\cite{goldberg2000genetic}. (The method in~\cite{raghavan2016beamforming}  cannot be applied since the formulas provided are only for the CM case.)

The Subarray Method (SM) recently presented in~\cite{Xiao2016} is also considered since close-form beamformer formulas are provided (thus no optimisation is required).

Fig.~\ref{fig:beam_pattern_CM}-\ref{fig:beam_pattern_SM} present the patterns of the synthesised beamformers using the methods mentioned above. Three choices of $M$ are considered, with $M=\{1,2,4\}$, where the desired patterns are indicated by dash lines. Note that in Fig.~\ref{fig:beam_pattern_CM}, we have termed the method in~\cite{abu2014synchronization} as CM-1 and the method in~\cite{raghavan2016beamforming} as CM-2. It can be seen that both the CM and VM methods can provide good approximations to the desired pattern. The SM beamformers have noticeable smaller beamforming gain at beam edges, which may become the bottleneck according to Proposition~\ref{proposition_optimal_beam}.

For all the 9 sets of synthesised beamformers, VM=1 is found to have the highest minimum average beamforming gain within the sector. In the following, these synthesised beamformers are adopted by the proposed RS transmission strategy and their relative performances are compared.

\begin{figure}[t]
\centering
\begin{subfigure}
  \centering
  \includegraphics[width=0.3\linewidth]{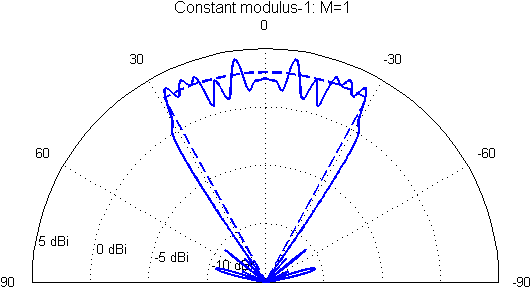}
\end{subfigure}
\begin{subfigure}
  \centering
  \includegraphics[width=0.3\linewidth]{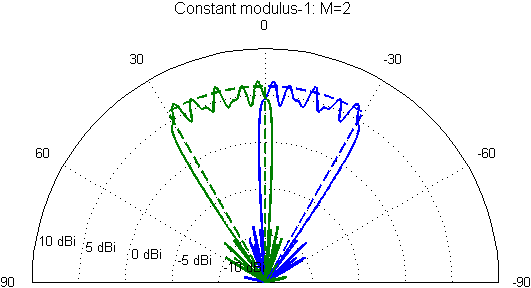}
\end{subfigure}
\begin{subfigure}
  \centering
  \includegraphics[width=0.3\linewidth]{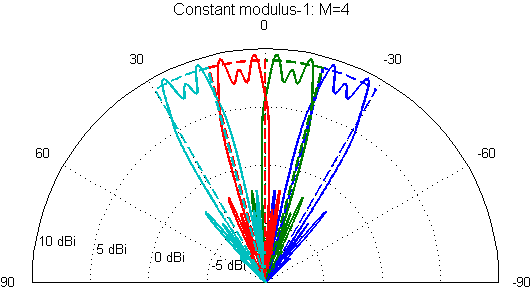}
\end{subfigure}
\begin{subfigure}
  \centering
  \includegraphics[width=0.3\linewidth]{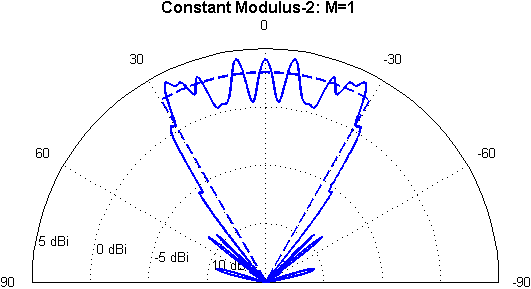}
\end{subfigure}
\begin{subfigure}
  \centering
  \includegraphics[width=0.3\linewidth]{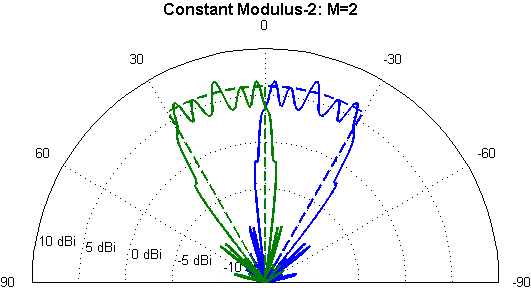}
\end{subfigure}
\begin{subfigure}
  \centering
  \includegraphics[width=0.3\linewidth]{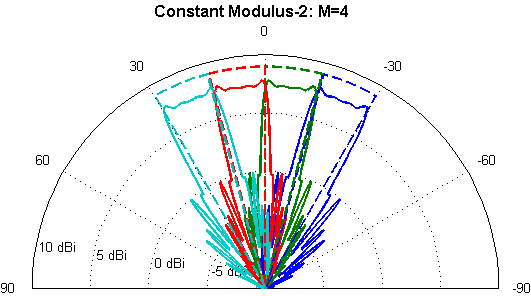}
\end{subfigure}
\caption{CM beam patterns with $M\in \{1,2,4\}$: $N_T = 32$. Solid lines: synthesised patterns; Dashed lines: desirable patterns. }
\label{fig:beam_pattern_CM}
\vspace{-1em}
\end{figure}

\begin{figure}[t]
\centering
\begin{subfigure}
  \centering
  \includegraphics[width=0.3\linewidth]{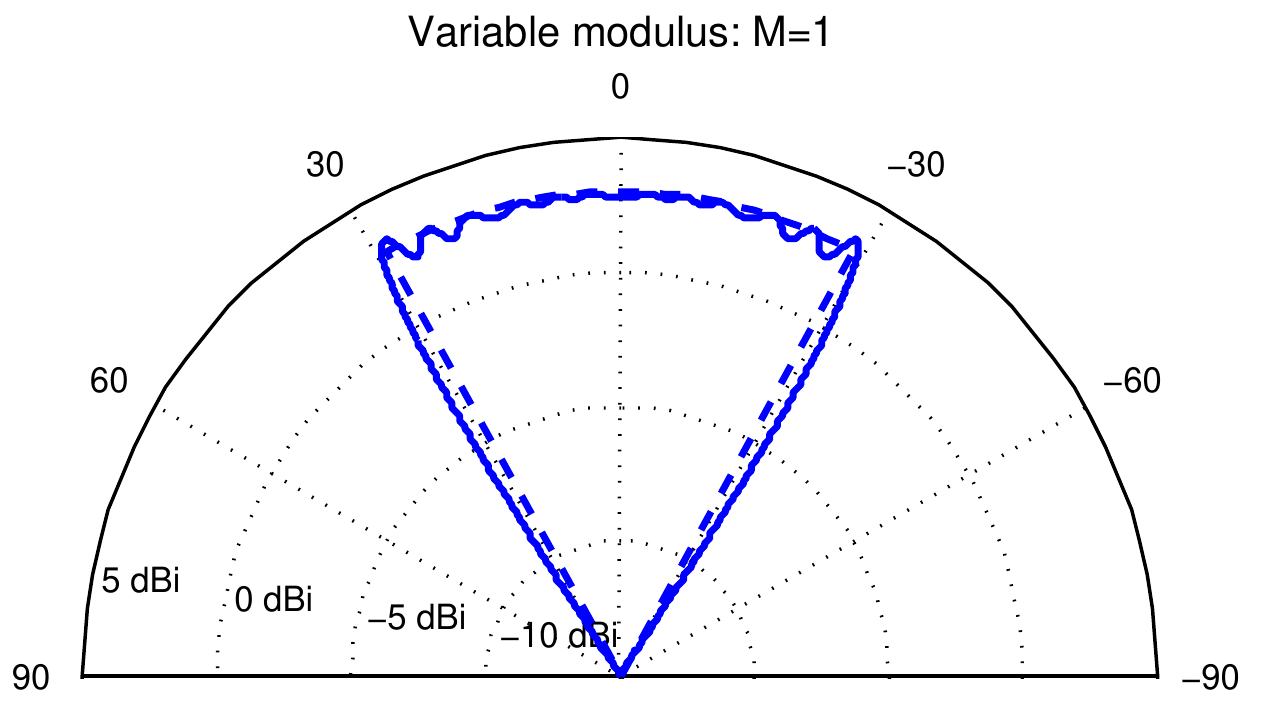}
\end{subfigure}
\begin{subfigure}
  \centering
  \includegraphics[width=0.3\linewidth]{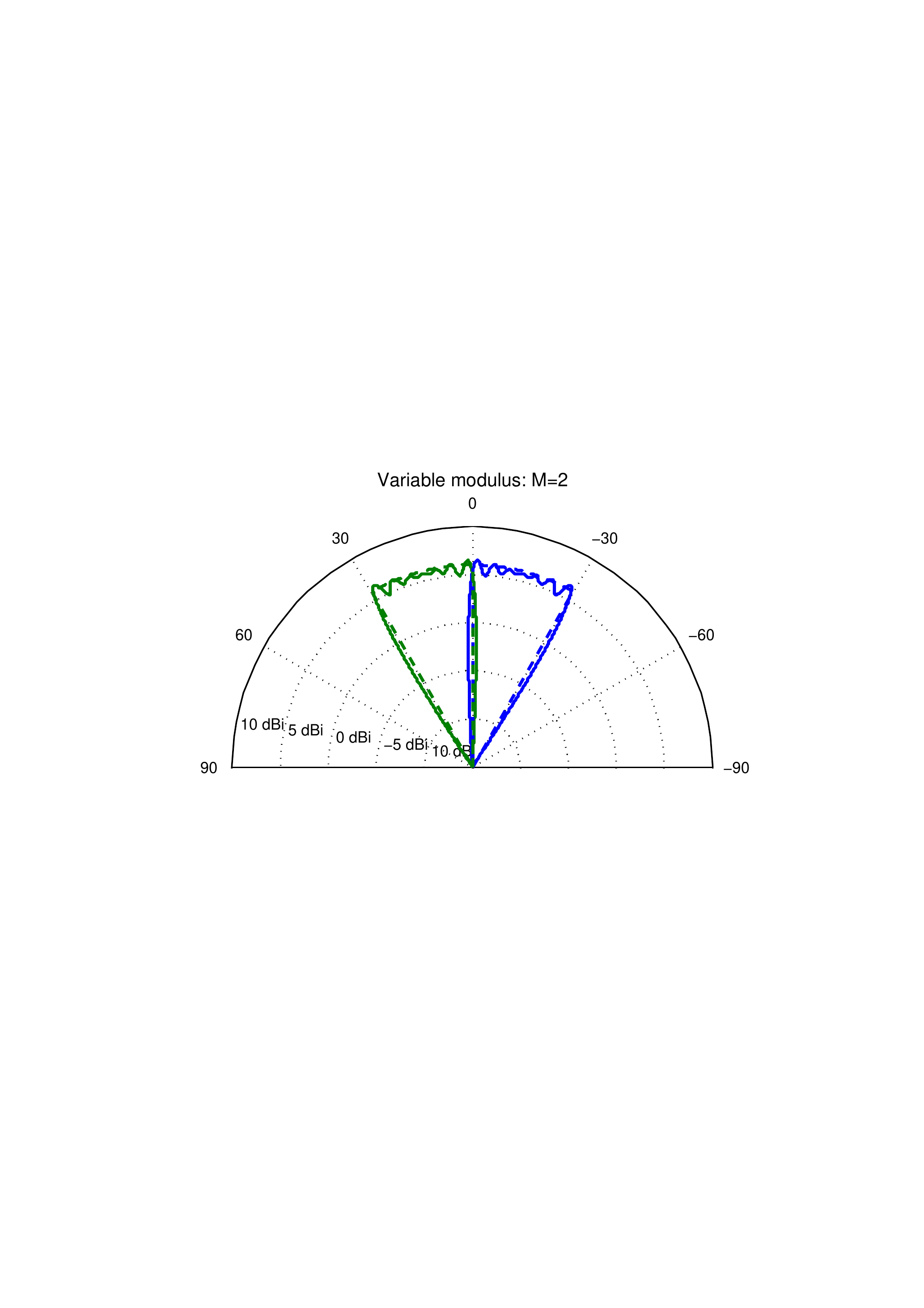}
\end{subfigure}
\begin{subfigure}
  \centering
  \includegraphics[width=0.3\linewidth]{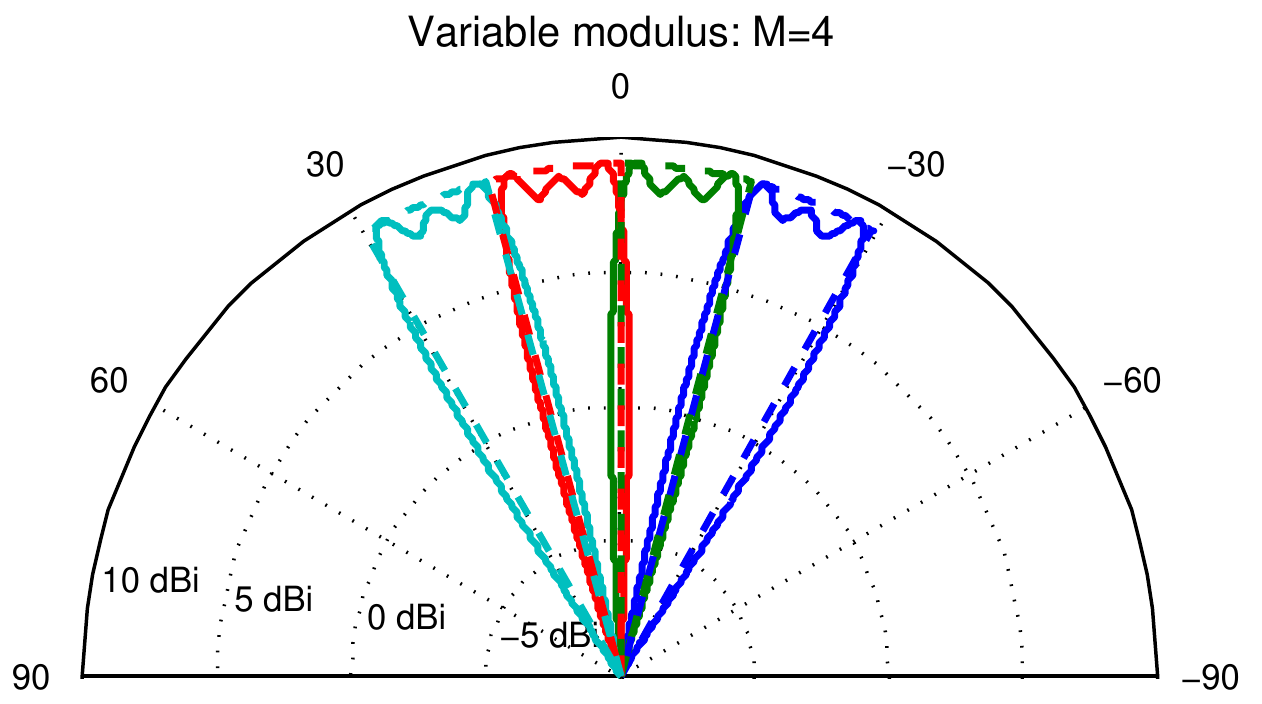}
\end{subfigure}
\caption{VM beam patterns with $M\in \{1,2,4\}$: $N_T = 32$. Solid lines: synthesised patterns; Dashed lines: desirable patterns. }
\label{fig:beam_pattern_VM}
\vspace{-1em}
\end{figure}

\begin{figure*}[t]
\centering
\begin{subfigure}
  \centering
  \includegraphics[width=0.3\linewidth]{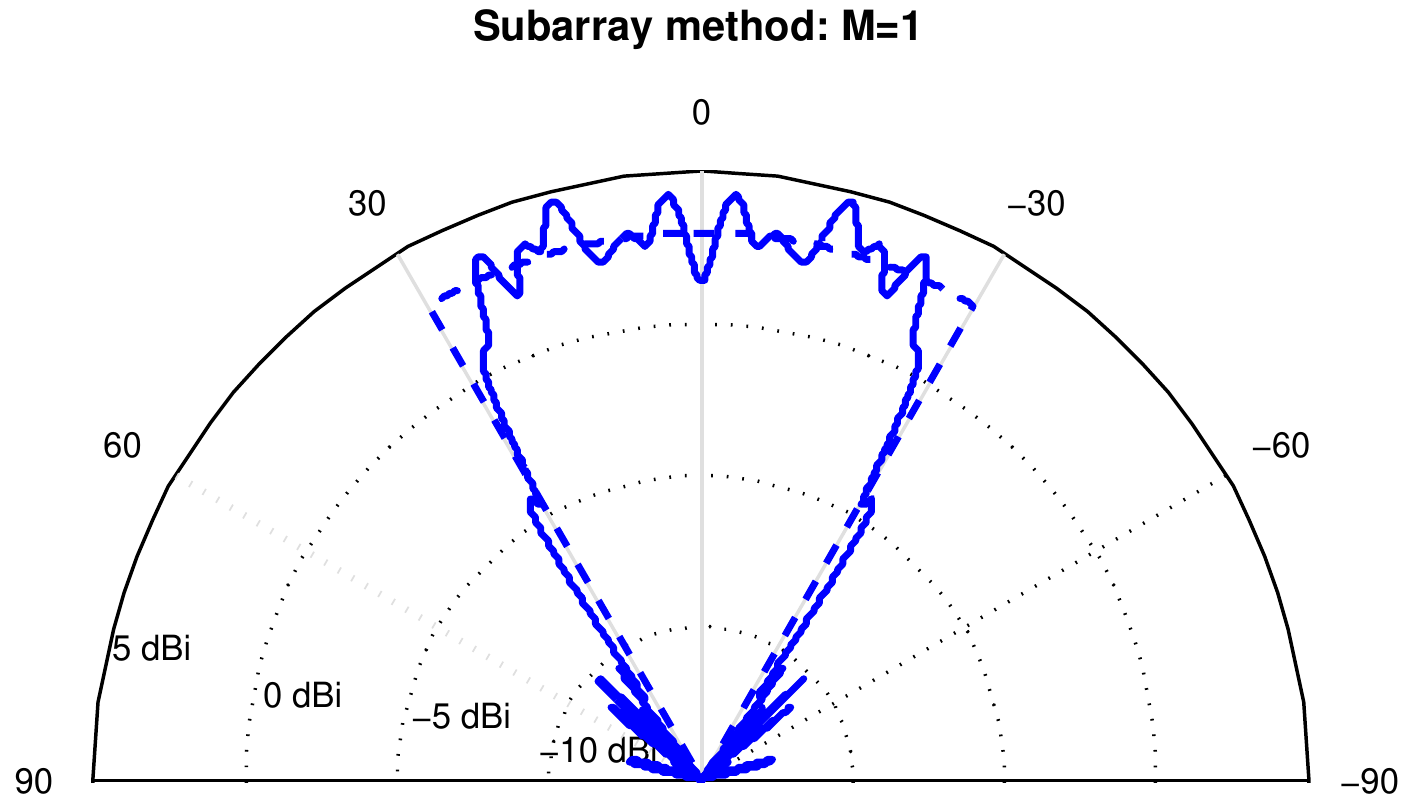}
\end{subfigure}
\begin{subfigure}
  \centering
  \includegraphics[width=0.3\linewidth]{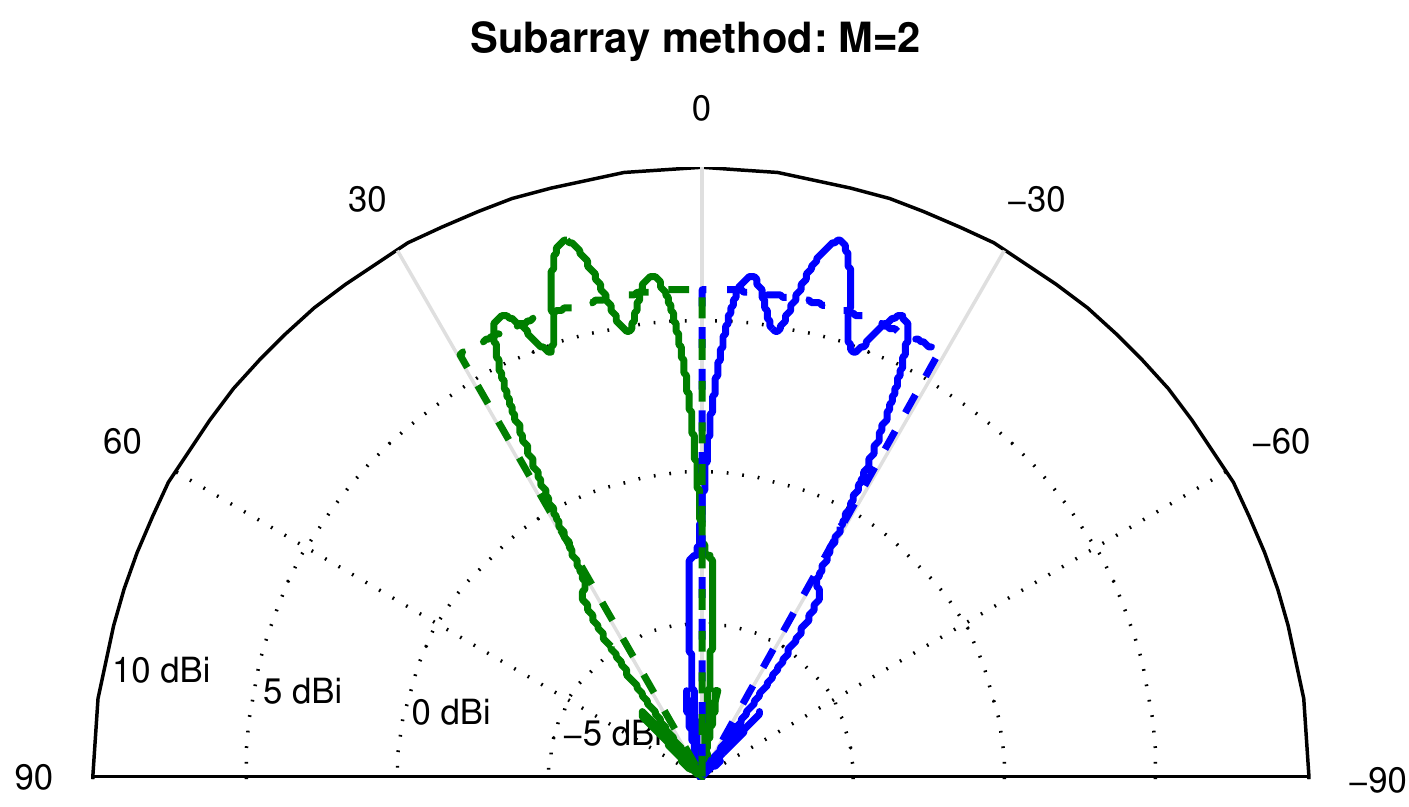}
\end{subfigure}
\begin{subfigure}
  \centering
  \includegraphics[width=0.3\linewidth]{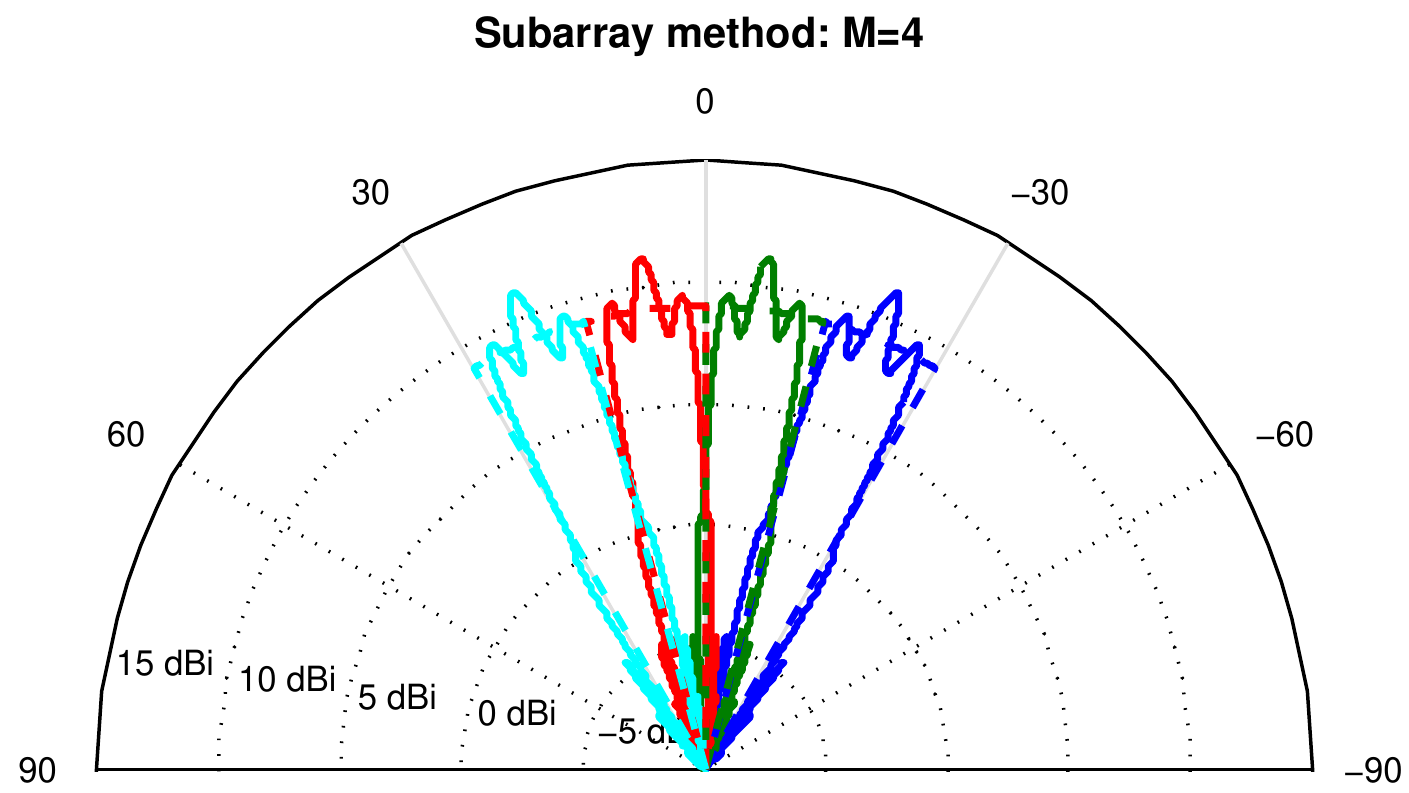}
\end{subfigure}
\caption{SM~\cite{Xiao2016} beam patterns with $M\in \{1,2,4\}$: $N_T = 32$. Solid lines: synthesised patterns; Dashed lines: desirable patterns.}
\label{fig:beam_pattern_SM}
\vspace{-1 em}
\end{figure*}

\begin{figure}[t]
\centering
\includegraphics[width=0.6\textwidth]{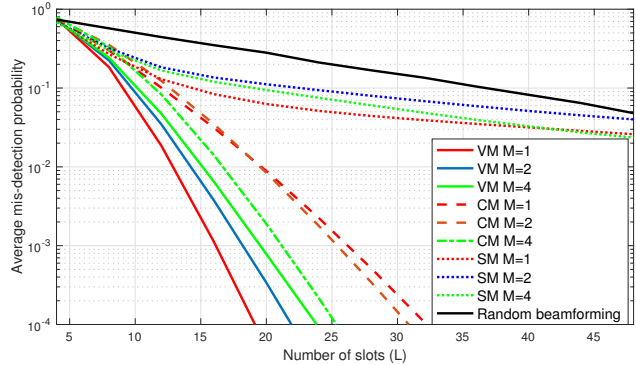}
\caption{Average miss-detection probability for an open $60^\circ$ sector: no blockage.  }
\label{fig:Analog_vs_digital}
\vspace{-1 em}
\end{figure}

\subsection{BS discovery performance without blockage}\label{sec:no_blockage}
In this subsection, we consider the first topology, where no blockage is present in the $60^{\circ}$ sector. Since in this case, $\alpha(\phi)$ is a constant for all directions in the sector, it can be seen from \eqref{Eq:number_beamformer} that $J_m = \frac{J}{M}$. 

Fig.~\ref{fig:Analog_vs_digital} shows the average miss-detection probability versus the number of slots ($L$) that UEs use for BS discovery, for the VM, CM and SM beamformers. When $M=\{1, 2\}$, CM-1 method is adopted as CM-1 outperforms CM-2. At $M=4$, CM-2 is used for a similar reason. RS transmission using random beamformers as considered in~\cite{barati2014dreictional} is also included as a baseline. The average miss-detection probability is obtained by averaging all possible directions within the sector and over $500$ channel realisations for each $L$.

As expected, RS transmission with VM=1 has the best BS discovery performance since it yields the highest minimum beamforming gain over all directions. The VM beamformers generally outperform the CM beamformers since the VM beamformers are more capable of approximating the desired patterns provided in \eqref{Optimal_pattern}. RS transmissions with the SM beamformers have much worse performance than that with the CM and VM beamformers, when $L$ is larger than $15$. This is because for the SM beamformers, significantly lower beamforming gain is observed at beam edges, as illustrated in Fig.~\ref{fig:beam_pattern_SM}.

In addition, the BS discovery performance exhibits different behaviours for VM beamformers and CM beamformers when different codebook sizes are considered. For the VM method, as shown in Fig.~\ref{fig:Analog_vs_digital}, $M=1$ achieves the best detection performance for all the values of $L$ examined. On the contrary, for the CM method, $M=4$ provides the best performance.
These results demonstrate that having a flexible choice of $M$ in the design, as we propose, is necessary and beneficial as each beam synthesis method may have its own 'best' $M$.

\subsection{BS discovery performance with blockage}\label{sec:blockage}
\begin{figure*}[t]
\center
\begin{minipage}{0.35\textwidth}
    \begin{center}
        \includegraphics[width=1\linewidth]{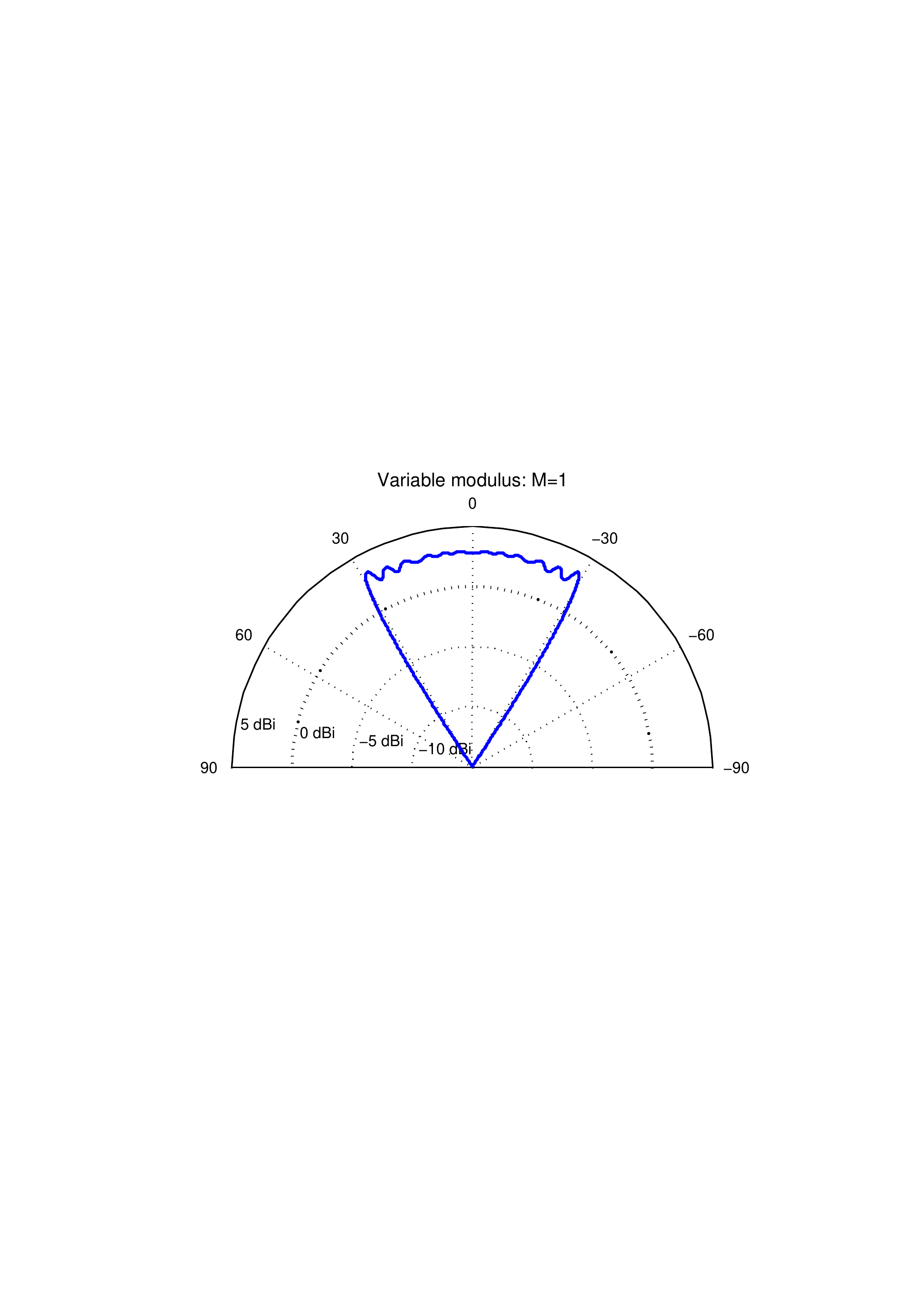}
	    \\
\vspace{+0.3cm}
        \includegraphics[width=1\linewidth]{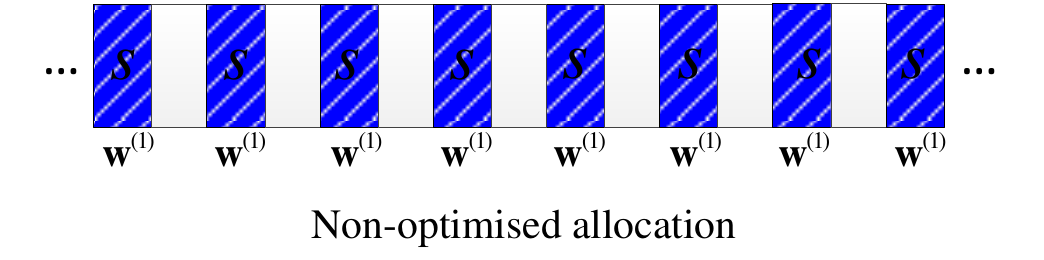}
        \\
        (a)
    \end{center}
\end{minipage}
\vspace{+0.3cm}
\begin{minipage}{0.35\textwidth}
    \begin{center}
        \includegraphics[width=1\linewidth]{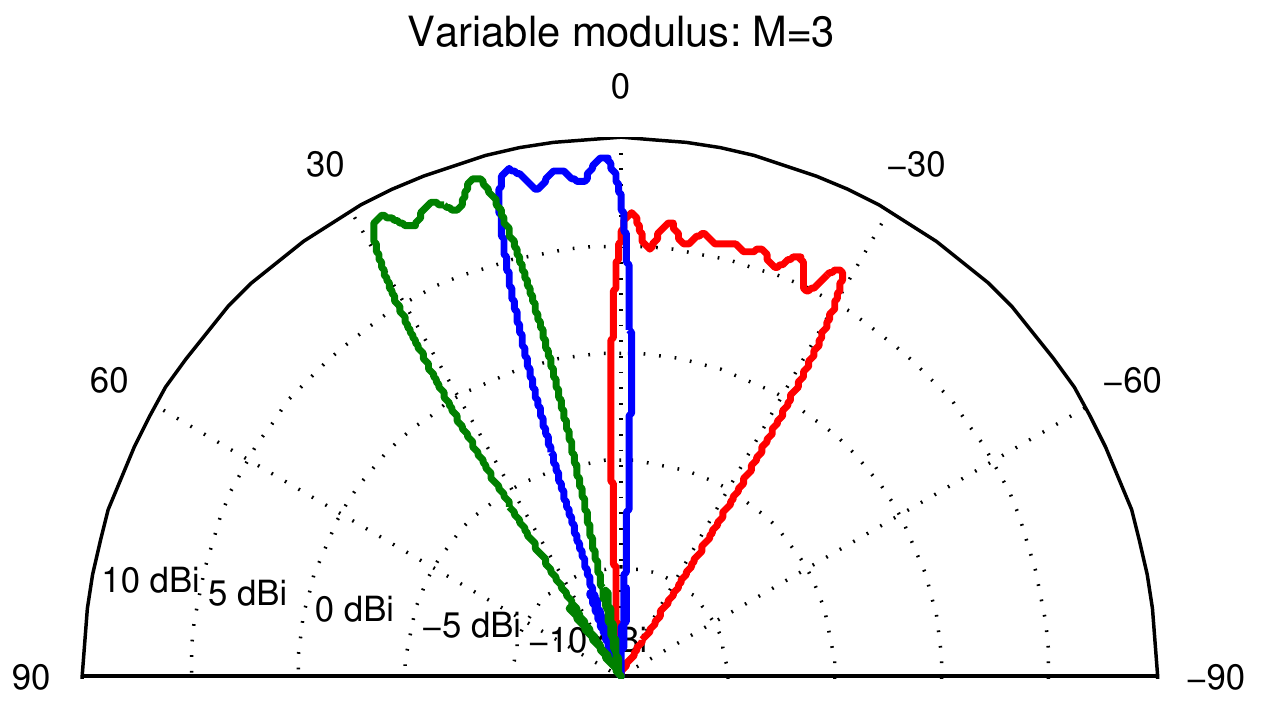}
        \\
        \vspace{+0.2cm}
        \includegraphics[width=1\linewidth]{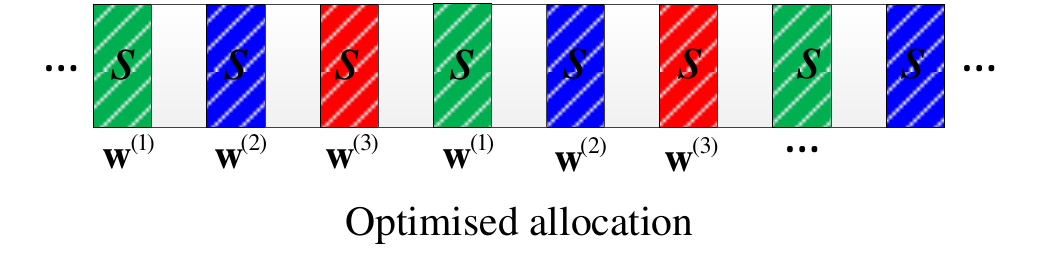}
        \\
        (c)
    \end{center}
\end{minipage}
\vspace{+0.3cm}
\begin{minipage}{0.35\textwidth}
    \begin{center}
        \includegraphics[width=1\linewidth]{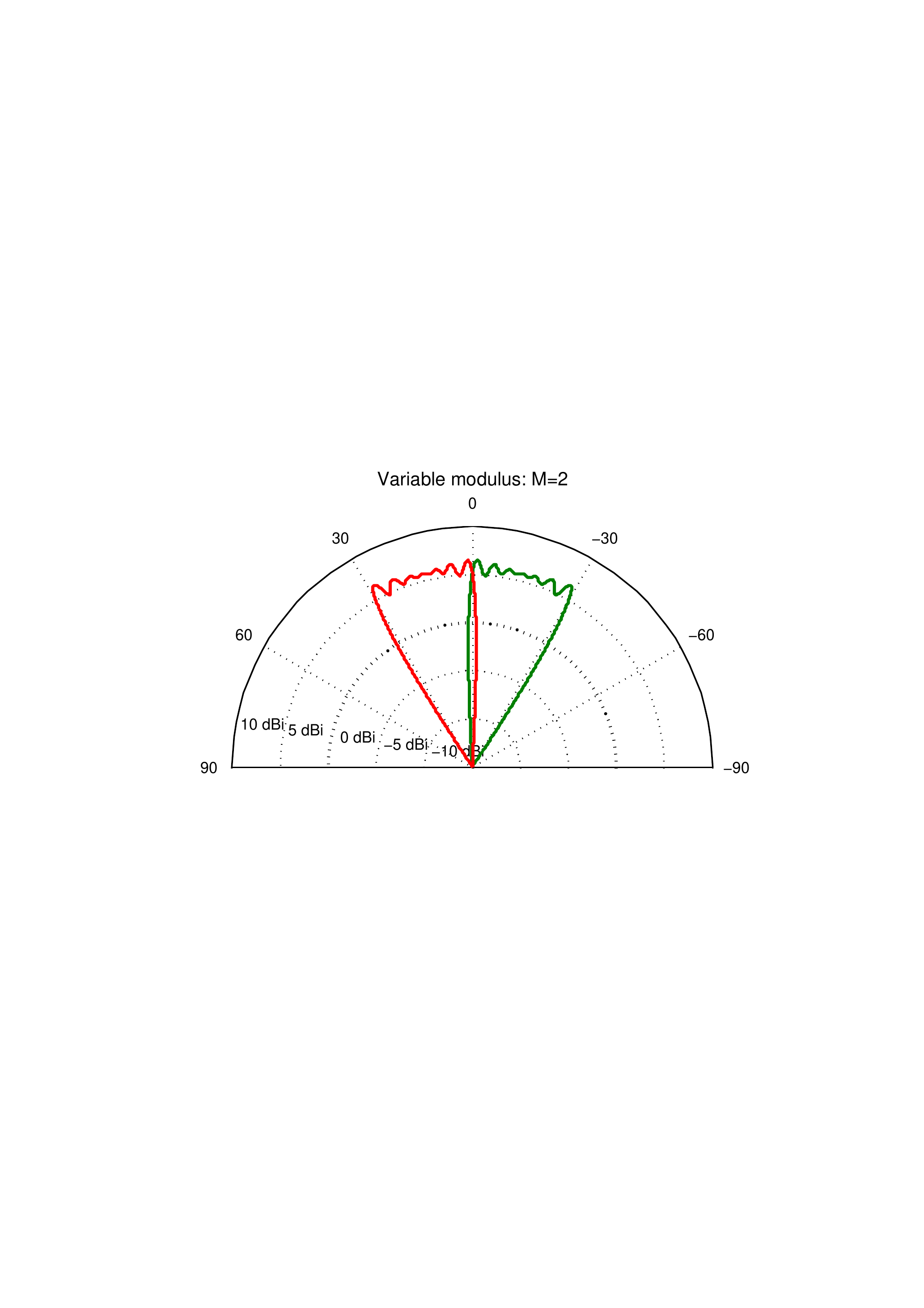}
        \\
        \vspace{+0.5cm}
        \includegraphics[width=1\linewidth]{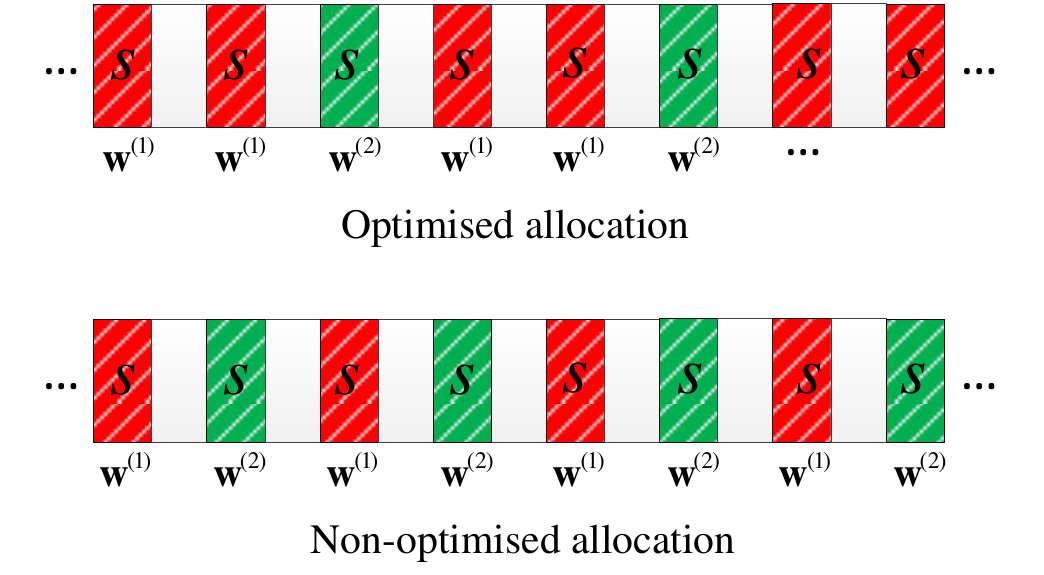}
        \\
        (b)
    \end{center}
\end{minipage}
\begin{minipage}{0.35\textwidth}
    \begin{center}
        \includegraphics[width=1\linewidth]{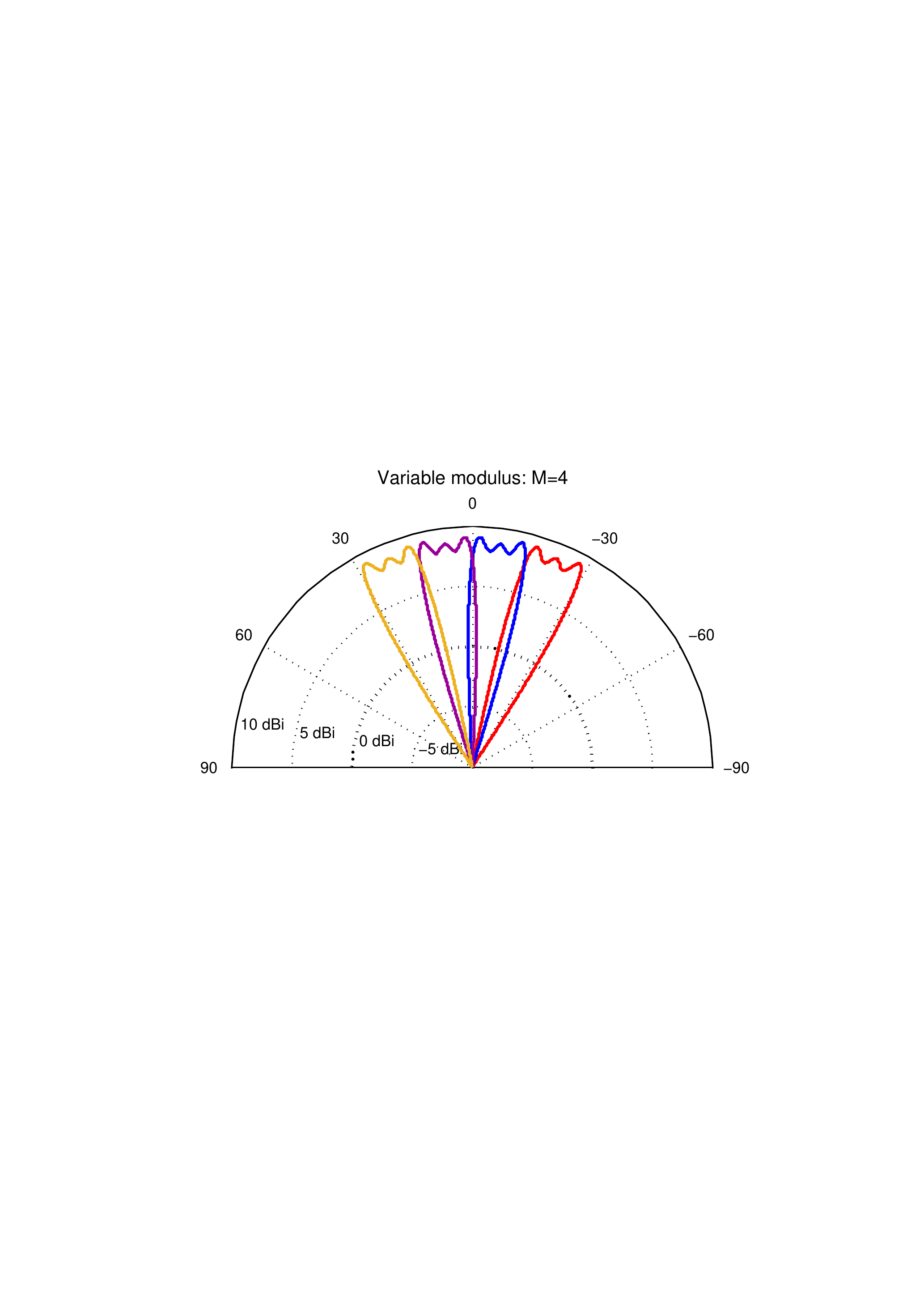}
        \\
        \vspace{+0.3cm}
        \includegraphics[width=1\linewidth]{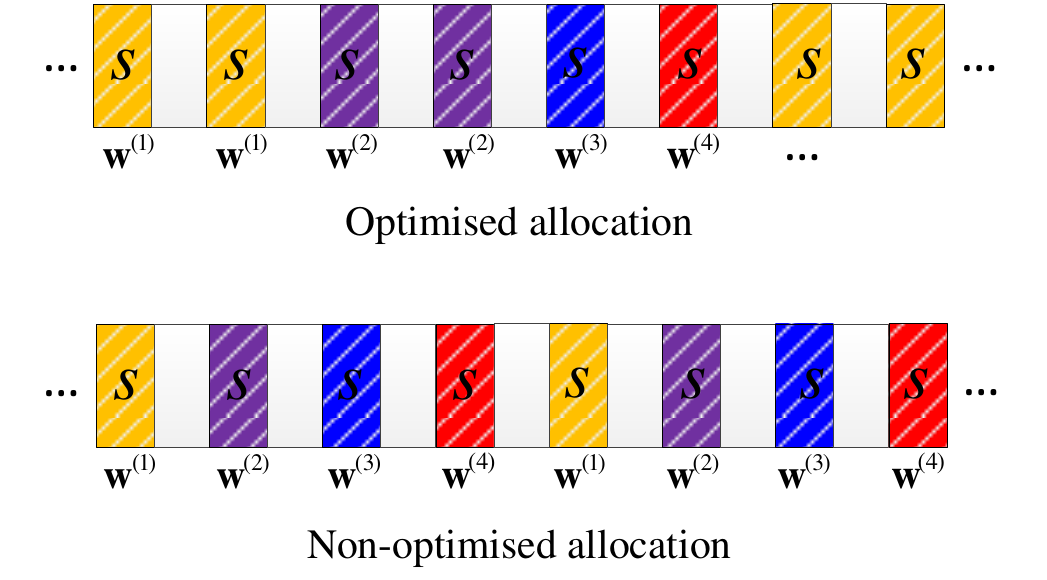}
        \\
        (d)
    \end{center}
\end{minipage}
\caption{Beam patterns and slot allocation for the synthesised beamformers using VM method. \emph{Optimised} slot allocation is obtained by applying \eqref{Eq:number_beamformer}, while \emph{Non-optimised} slot allocation takes an equal slot allocation to the $M$ beamformers.}
\label{fig:beam_pattern_asym}
\vspace{-1.0 em}
\end{figure*}

In this subsection, we consider the second topology with blockage, as illustrated in Fig.~\ref{fig:pathloss}.
For this topology, according to \eqref{Optimal_pattern_condition}, a higher average beamforming gain is required at the open directions ($[0^{\circ}, 30^{\circ}]$) compared to the directions with blockage ($[-30^{\circ}, 0^{\circ}]$). To approximate \eqref{Optimal_pattern_condition}, three values of $M$, i.e., $M=2,3,4$ are examined. It is noted that we do not have $M=1$ for this example due to the difficulty in synthesising a single beamformer to match the desired non-uniform beam pattern. The synthesised beam patterns, along with optimised slot allocations, are illustrated in Fig.~\ref{fig:beam_pattern_asym}. We refer to these as the \emph{optimised} schemes. (For brevity, we only present RS transmission designs using VM beamformers and their corresponding detection performance.)

The performance of BS discovery from the optimised beamforming strategies are compared to a set of baseline strategies. For these baselines, the beamformers are obtained in the same way as those for the optimised cases. However, an equal slot allocation is used. It is easy to check that the resultant average beam pattern does not satisfy condition \eqref{Optimal_pattern_condition}. In Fig.~\ref{fig:beam_pattern_asym}, these baselines are also illustrated and we refer them as \emph{non-optimised} schemes. RS transmission using random beamformers as considered in~\cite{barati2014dreictional} is also included as a baseline.

Fig.~\ref{fig:blocked_model} presents the corresponding BS discovery performance as a function of the number of slots used for BS discovery ($L$). 
Clearly, the optimised RS transmissions using the beamformers constructed according to \eqref{Optimal_pattern_condition} significantly outperform all the baselines. For a reasonable UE searching time of $5$ ms (e.g., $L=10$), the optimised RS transmissions provide orders of magnitude improvement over the RS transmission with random beamforming.
For a targeted average miss-detection rate, e.g., at $10^{-3}$, the optimised RS transmission with $M=2$ requires a searching time $L=12$, which is $20\%$ faster than that required by the best non-optimised baseline ($M=1$).
These results clearly demonstrate the benefit of designing the RS transmissions that take into account asymmetric coverage of a mm-wave BS.

\begin{figure}[t]
\centering
\includegraphics[width=0.6\textwidth]{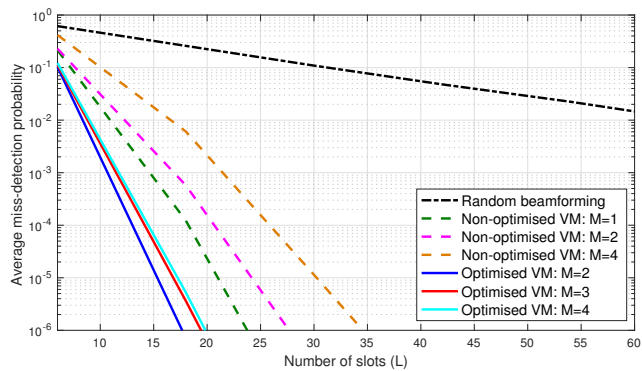}
\caption{Average miss-detection probabilities for a half-blocked $60^\circ$ sector (as illustrated in Fig.\ref{fig:pathloss}).}
\label{fig:blocked_model}
\vspace{-0.6 cm}
\end{figure}

\subsection{The impact of per-antenna power constraint}
Up until now, we have assumed that the dynamic ranges of the power amplifiers (PAs) for the BS antennas are large enough to accommodate the VM beamformers. However, due to the poor power efficiency of the PAs in mm-wave bands, this may not be true since each PA may be designed to operate at full power in order to provide sufficient transmit power~\cite{huang2011millimeter}. In this context, per-antenna power constraints are important, and may influence the implementation of the VM beamformers that require variable amplitudes at different antennas.

In the following, we demonstrate the impact of per-antenna power constraints on the construction of RS transmit beamformers and on the performance of BS discovery. The first topology without blockage is considered.
In the simulation,
we introduce a normalised per-antenna power constraint, denoted by $\beta$, as the ratio of per-antenna power limit to the maximum total transmit power $P_T$.
We consider that $\beta$ is in the range $\beta \in [1/N_T,1]$. In this case, for the CM beamformers, the transmit power at each antenna is $P_T/N_T$. For the VM beamformers, the beamforming coefficients are scaled up/down by the same factor such that both the per-antenna and the total power constraint are satisfied. In particular, when $\beta=1/N_T$, because of the uniform scaling, only the antenna with the largest beamforming weight transmits with power $P_T\beta = P_T/N_T$, all other antennas are transmitting with smaller powers. This means that the total transmit power is less than $P_T$, i.e., there is a loss in the total transmission power. As $\beta$ increases, the power loss becomes smaller and ultimately reduces to zero for sufficiently large $\beta$s.

\begin{figure}[t]
\centering
\includegraphics[width=0.6\textwidth]{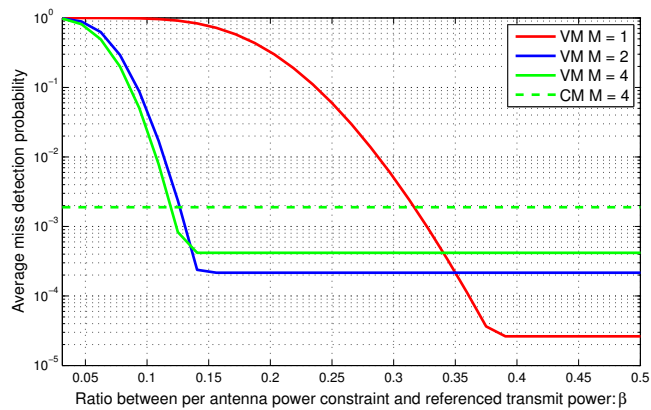}
\caption{Average miss-detection probability with different beamfomers and per antenna power constraint: $L=20$.  }
\label{fig:per_antenna}
\vspace{-1 em}
\end{figure}

Fig.~\ref{fig:per_antenna} shows the average miss-detection probability versus per-antenna power constraint $\beta$.
It can be seen that when $\beta$ is small (e.g., $\beta<0.1$), the VM beamformers perform significantly worse than the CM beamformers due to the power loss. Increasing the dynamic range of the power amplifiers will favor the VM implementation.  However, this comes at the price of increased cost and difficulty in implementing mm-wave transceivers. The trade-off between achievable performance and the hardware cost is clearly shown via these numerical results.

Moreover, the results also demonstrate the benefit of the proposed beamformer construction method due to a flexible choice of $M$. For instance, under a strict per-antenna power constraint, e.g., $\beta=0.15$, the VM beamformer with $M=2$ outperforms the alternative VM beamformers with $M=1$ and $M=4$. In the absence of per-antenna power constraint, the VM beamformer with $M=1$ performs the best.
However, this benefit comes at the price of $5$ dB dynamic range increase as compared with that required by $M=2$.

\section{Conclusions}\label{sec:con}
In this paper, we have developed an analytical framework for mm-wave BS discovery and proposed an effective RS transmission strategy based on sequential spatial scanning using beamforming. We have established the relationship between the performance of BS discovery and system parameters, including the beamformers used for RS transmission. Using the method of large deviations, we have identified the desirable beam patterns for RS transmission, and shown that the desirable beam patterns are asymptotically optimal and yield the minimum average miss-discovery probability for UEs in a targeted detectable region. To approximate the desirable beam patterns, we have proposed a systematic approach to construct the beamforming codebook, which is flexible in choosing the beam widths and thus the size of the corresponding codebook. Numerical results have demonstrated the effectiveness of the proposed method.

\appendices
\section{Proof of Proposition~\ref{Proposition:GLRT}}\label{App_GLRT_detector}
In this proof, for notational simplicity, we drop the dependency of all metrics on $\tau$ when no ambiguity is caused.

Since all entries of $\mathbf{Z}$ are i.i.d. complex Gaussian with zero mean and variance $\sigma^2$, the PDF of $\mathbf{Z}$ can be represented as $p(\mathbf{Z}) = \frac{1}{(\pi \sigma^2)^N} \exp{\left\{-\frac{\|\mathbf{Z}\|_F^2}{\sigma^2}\right\}}$, where $N\doteq N_RN_SL$.
Therefore, it can be obtained that
$p\left( \mathbf{Y}|{\cal H}_0; \sigma_0^2 \right) = \frac{1}{(\pi \sigma_0^2)^N} \exp{\left\{-\frac{\|\mathbf{Y}\|_F^2}{\sigma_0^2}\right\}}$
and
$p\left(\mathbf{Y}|{\cal H}_1;\mathbf{h},\sigma_1^2 \right) = \frac{1}{(\pi \sigma_1^2)^N} \exp{\left\{-\frac{\|\mathbf{Y}-\mathbf{h}\mathbf{s}^T\|_F^2}{\sigma_1^2}\right\}}$.
It can then be shown that under ${\cal H}_0$, the ML estimate of $\sigma_0^2$ is
{\small\begin{equation}\label{ML_sigma0}
\hat{\sigma}_0^2 = \frac{||\mathbf{Y}||_F^2}{N} =  \frac{1}{N}\sum_{l=1}^L \|\mathbf{Y}_l\|_F^2,
\end{equation}}
and under ${\cal H}_1$, the ML estimate of $\mathbf{h}$ and $\sigma_1^2$ are:
{\small\begin{equation}\label{ML_h1}
\hat{\mathbf{h}} =  \frac{1}{\|\mathbf{s}\|_2^2}\mathbf{Y}\mathbf{s}^*,
\end{equation}}
and
{\small\begin{equation}\label{ML_sigma1_a}
\hat{\sigma}_1^2 = \frac{\|\mathbf{Y}-\hat{\mathbf{h}}\mathbf{s}^T\|_F^2}{N},
\end{equation}}
respectively, where $\mathbf{s}^* = (\mathbf{s}^\dagger)^T$.

Substituting \eqref{ML_h1} into \eqref{ML_sigma1_a}, it can be obtained that
{\small
\begin{align}\label{ML_sigma1}
\hat{\sigma}_1^2 & = \frac{1}{N}\left(\left\|\mathbf{Y} - \frac{1}{\|\mathbf{s}\|_2^2}\mathbf{Y}\mathbf{s}^*\mathbf{s}^T\right\|_F^2\right) =\frac{1}{N} \mathrm{Tr}\left\{\left(\mathbf{Y} - \frac{1}{\|\mathbf{s}\|_2^2}\mathbf{Y}\mathbf{s}^*\mathbf{s}^T\right)\left(\mathbf{Y} - \frac{1}{\|\mathbf{s}\|_2^2}\mathbf{Y}\mathbf{s}^*\mathbf{s}^T\right)^\dagger \right\}\nonumber\\
& = \frac{1}{N}\left(\|\mathbf{Y}\|_F^2 - \frac{1}{\|\mathbf{s}\|_2^2}\|\mathbf{Y}\mathbf{s}^*\|_F^2\right) = \frac{1}{N}\sum_{l=1}^L \left( \|\mathbf{Y}_l\|_F^2 - \frac{1}{\|\mathbf{s}\|_2^2}\|\mathbf{Y}_l\mathbf{s}^*\|_2^2 \right).
\end{align}
}

Substituting \eqref{ML_sigma0}, \eqref{ML_h1} and \eqref{ML_sigma1} into \eqref{GLRT_statistic}, the test statistics $L'_G(\tau)$ can be represented as:
{\small\begin{equation}\label{GLRT_sta}
 L'_G(\tau) = \left( \frac{\hat{\sigma}_0^2}{\hat{\sigma}_1^2}\right)^N = \left(\frac{ \sum_{l=1}^L\|\mathbf{Y}_{l,\tau}\|_F^2}{\sum_{l=1}^L \left(\|\mathbf{Y}_{l,\tau}\|_F^2 - \frac{1}{\|\mathbf{s}\|_2^2}\|\mathbf{Y}_{l,\tau}\mathbf{s}^*\|_2^2\right)} \right)^N.
\end{equation}}
Equivalently, the test \eqref{GLRT_statistic} can be re-written as \eqref{GLRT_sta_log}, 
where $\gamma = \left(\gamma'\right)^{1/N}-1>0$ is the test threshold.

\section{Proof of Proposition~\ref{Proposition_GLRT_sta}}\label{App_pGLRT}
We first show that random variables $U_l\triangleq  \frac{1}{\|\mathbf{s}\|_2^2} ||\mathbf{Y}_{l}\mathbf{s}^*||_2^2$ and $V_l \triangleq ||\mathbf{Y}_{l}||_F^2 -  \frac{1}{\|\mathbf{s}\|_2^2} ||\mathbf{Y}_{l}\mathbf{s}^*||_2^2$
are statistically independent, where we recall that $\mathbf{s}^* = (\mathbf{s}^\dagger)^T$.

Under event ${\cal H}_1$, it can be obtained that
{\small\begin{equation}
\mathbf{Y}_{l} = \mathbf{h}_l\mathbf{s}^T + \mathbf{Z}_l.
\end{equation}}
Denote $P_T$ as the average transmit power with $\|\mathbf{s}\|_2^2 = N_sP_T$ and $\bar{\mathbf{z}}_{l,N_s} = \frac{1}{\sqrt{P_TN_s}}\mathbf{Z}_l\mathbf{s}^*$, it can be obtained that
{\small\begin{align}
U_l =& \left \|\sqrt{P_TN_s}\mathbf{h}_l + \bar{\mathbf{z}}_{l,N_s}\right \|_2^2 =  P_TN_s ||\mathbf{h}_l||_2^2 + ||\bar{\mathbf{z}}_{l,N_s}||_2^2 + \sqrt{P_TN_s} \mathbf{h}_l^\dagger \bar{\mathbf{z}}_{l,N_s} + \sqrt{P_TN_s} \bar{\mathbf{z}}_{l,N_s}^\dagger \mathbf{h}_l,\label{Eq:numerator}
\end{align}}
and
{\small\begin{align}
V_l = & P_TN_s ||\mathbf{h}_l||_2^2 + ||\mathbf{Z}_l||_F^2 + \sqrt{P_TN_s} \mathbf{h}_l^\dagger \bar{\mathbf{z}}_{l,N_s} + \sqrt{P_TN_s} \bar{\mathbf{z}}_{l,N_s}^\dagger \mathbf{h}_l-U_l =  ||\mathbf{Z}_l||_F^2 - ||\bar{\mathbf{z}}_{l,N_s}||_2^2.
\end{align}}

Construct a unitary matrix $\tilde{\mathbf{S}} \triangleq \left[\tilde{\mathbf{s}}_1,\ldots,\tilde{\mathbf{s}}_{N_s-1},\mathbf{s}^*/\sqrt{P_TN_s} \right] \in \mathbb{C}^{N_s\times N_s}$ with the last column being $\mathbf{s}^*/\sqrt{P_TN_s}$, then $V_l$ can further be represented as:
{\small\begin{equation}\label{Eq:denominator}
V_l = ||\mathbf{Z}_l\tilde{\mathbf{S}}||_F^2 - ||\bar{\mathbf{z}}_{l,N_s}||_2^2 = \sum_{n=1}^{N_s-1}||\bar{\mathbf{z}}_{l,n}||_2^2,
\end{equation}}
where $\bar{\mathbf{z}}_{l,n} = \mathbf{Z}_l\tilde{\mathbf{s}}_{n}$, $n=1,\ldots,N_s-1$.
It is easy to check that the vectors $\bar{\mathbf{z}}_{l,n}$, $n=1,\ldots,N_s$, are jointly Gaussian.
Since by construction, $\bar{\mathbf{z}}_{l,n}$, $n=1,\ldots,N_s$, are mutually uncorrelated, they are statistically independent.
As $U_l$ is only a function of $\bar{\mathbf{z}}_{l,N_s}$ and $V_l$ is a function of $\bar{\mathbf{z}}_{l,n}$, $n=1,\ldots,N_s-1$, $U_l$ and $V_l$ are independent.

As the noises are i.i.d., $U_l$s and $V_l$s are statistically independent.  It can therefore be concluded that $U\triangleq \sum_{l=1}^{L}U_l$ and $V\triangleq \sum_{l=1}^{L}V_l$ are independent.

From \eqref{Eq:numerator} and \eqref{Eq:denominator}, it is clear that under ${\cal H}_1$, $2U_l/\sigma^2$ has a non-central chi-square distribution with $2N_R$ degrees of freedom~(DoFs) and a non-centrality parameter $2N_sP_T\|\mathbf{h}_l\|_2^2/\sigma^2$, and $2V_l/\sigma^2$ admits a central chi-square distribution with $2N_R(N_s-1)$ DoFs.
Thus, $2U/\sigma^2$ has a non-central chi-square distribution with $2LN_R$ DoFs and a non-centrality parameter
{\small\[\lambda = 2P_TN_s/\sigma^2\sum_{l=1}^L\|\mathbf{h}_l\|_2^2\]}
 and $2V/\sigma^2$ admits a central chi-square distribution with $2LN_R(N_s-1)$ DoFs:
{\small\begin{equation}
\left\{\begin{array} {l}
        \frac{2U}{\sigma^2} \sim \chi^2_{2LN_R}(\lambda), \\
        \frac{2V}{\sigma^2} \sim \chi^2_{2LN_R(N_s-1)}.
        \end{array}
\right.
\end{equation}}
The proof of \eqref{F_distribution} under ${\cal H}_1$ is immediate by considering that $(N_s-1)L_G(\tau) = \frac{2U/(2LN_R\sigma^2)}{2V/(2LN_R(N_s-1)\sigma^2)}$.

The proof of \eqref{F_distribution} under hypothesis ${\cal H}_0$ is trivial by putting $\mathbf{h} = \mathbf{0}$ in the ${\cal H}_1$ case.
This concludes the proof of Proposition~\ref{Proposition_GLRT_sta}.

\section{Proof of Lemma~\ref{lemma_fading}}\label{proof_lemma}
The probability of miss-detection can be represented as $P_{miss} = Pr\{L_G(\tau)\leq\gamma|{\cal H}_1, \bar{h}_L<\underline{h}\}\times \xi +Pr\{L_G(\tau)\leq\gamma|{\cal H}_1, \bar{h}_L\geq\underline{h}\} \times (1-\xi)$.
Since $F(x|n_1,n_2,\lambda)$ is monotonically decreasing with respect to $\lambda$, provided that $n_1$ and $n_2$ are fixed~\cite{baharev2008computation}, it can be seen that $Pr\{L_G(\tau)\leq\gamma|{\cal H}_1, \bar{h}_L\geq\underline{h}\} \leq Pr\{L_G(\tau)\leq\gamma|{\cal H}_1, \bar{h}_L=\underline{h}\} = F(\gamma|2N_RL, 2NR_L(N_s-1),L\underline{\eta})$. It can then be obtained that for an arbitrary $\xi>0$ with $Pr\{\bar{h}_L<\underline{h}\}=\xi$,
{\small\begin{align}\label{Eq:lemma1_proof}
 P_{miss} & \leq \xi + (1-\xi)\times Pr\{L_G(\tau)\leq\gamma|{\cal H}_1, \bar{h}_L\geq\underline{h}\} \nonumber \\
 & \leq  \xi + (1-\xi)F(\gamma|2N_RL, 2NR_L(N_s-1),L\underline{\eta}).
\end{align}}

\section{Proof of Proposition~\ref{Proposition:LDP}} \label{proof_LDP}
Let $U$ and $V$ be the random variables defined in Appendix~\ref{App_pGLRT}, and denote $\tilde{U} \triangleq \frac{2U}{\sigma^2}$, $\tilde{V} \triangleq \frac{2V}{\sigma^2}$, $U_L \triangleq \frac{\tilde{U}}{L}$, $V_L \triangleq \frac{\tilde{V}}{L}$. Random variables $U_L$ and $V_L$ are independent as $U$ and $V$ are independent.

We first prove \eqref{LDP:L} by showing that $(U_L,V_L)$ satisfy the LDP, using the well known Gartner-Ellis Theorem. To demonstrate this, we need to show that the limiting logarithmic moment generation function~(MGF)
{\small \begin{equation} \label{lim_MGF}
\Lambda(\mathbf{t}) = \lim_{L\uparrow \infty} \frac{1}{L}\Lambda_{L}(L\mathbf{t})
\end{equation}}
exists as an extended real number~\cite[Chapter 2.3, pp. 43]{dembo2009large}, where $\Lambda_{L}(\mathbf{t}) \triangleq \log M_{(U_L,V_L)}(\mathbf{t})$ is the logarithmic MGF of $(U_L,V_L)$ and $\mathbf{t}=[t_1,t_2]$.

Due to the independence between $U_L$ and $V_L$, the MGF of $(U_L,V_L)$ is simply $M_{(U_L,V_L)}(\mathbf{t}) = M_{U_L}(t_1)M_{U_L}(t_2)$, where $M_{U_L}(t) =  \mathbb{E}_{U_L}\{e^{tU_L}\}$ and $M_{V_L}(t) =  \mathbb{E}_{V_L}\{e^{tV_L}\}$ are the MGFs for $U_L$ and $V_L$, respectively. We can therefore obtain that
{\small \begin{align}\label{log_MGF}
\Lambda_{L}(L\mathbf{t})  &= \log M_{U_L}(Lt_1)+ \log M_{V_L}(Lt_2) = \log \mathbb{E}_{\tilde{U}}\{e^{t_1\tilde{U}}\} + \log \mathbb{E}_{\tilde{V}}\{e^{t_2\tilde{V}}\}.
\end{align}}
Recall the facts shown in Appendix~\ref{App_pGLRT} that $\tilde{U}=\frac{2U}{\sigma^2} \sim \chi^2_{2LN_R}(\lambda)$ and $\tilde{V}=\frac{2V}{\sigma^2} \sim \chi^2_{2LN_R(N_s-1)}$, we can obtain $M_{\tilde{U}}(t) =  \mathbb{E}_{\tilde{U}}\{e^{t\tilde{U}}\}$ and $M_{\tilde{V}}(t) =  \mathbb{E}_{\tilde{V}}\{e^{t\tilde{V}}\}$ as follows:
{\small \begin{equation} \label{MFG_U}
\mathbb{E}_{\tilde{U}}\{e^{t_1\tilde{U}}\}  =  \left\{
									\begin{array}{ll}
										\frac{e^{\frac{\lambda t_1}{1-2t_1}}}{(1-2t_1)^{LN_R}}, & t_1<\frac{1}{2} \\
										+\infty, & \text{otherwise}
									   \end{array}
									   \right.
\end{equation}
\begin{equation} \label{MFG_V}
\mathbb{E}_{\tilde{V}}\{e^{t_2\tilde{V}}\}= \left\{
									\begin{array}{ll}
										(1-2t_2)^{-LN_R(N_s-1)}, & t_2<\frac{1}{2} \\
										+\infty. & \text{otherwise}
									   \end{array}
									   \right.
\end{equation}}

Using \eqref{MFG_U}, \eqref{MFG_V}, \eqref{log_MGF} and \eqref{lim_MGF}, it can be shown that
{\small \begin{align}
\small \Lambda(\mathbf{t}) &= \lim_{L\uparrow \infty} \frac{1}{L}\Lambda_{L}(L\mathbf{t}) = \left\{
		\begin{array}{ll}
			\frac{\eta t_1}{1-2t_1} - N_R\log(1-2t_1) -N_R(N_s-1)\log(1-2t_2), & t_1, t_2<\frac{1}{2} \\
			+\infty. & \text{otherwise}
		\end{array}
		 \right.
\end{align}}
Further, it can be shown that $\Lambda(\mathbf{t}) = 0$ only when $t_1<\frac{1}{2}$ and $t_2<\frac{1}{2}$. This verifies that $(U_L,V_L)$ satisfy the Gartner-Ellis conditions~\cite[Assumption 2.3.2, pp. 43]{dembo2009large}.

The rate function of $(U_L,V_L)$, i.e., $I_L(u,v)$ can then be obtained as
{\small \begin{align}
I_L(u,v)& \triangleq \sup_{t_1,t_2 \in \mathbb{R}} \{t_1u+t_2v - \Lambda(\mathbf{t})\} \nonumber \\
& = \sup_{t_1,t_2<\frac{1}{2}} \{t_1u+t_2v - \frac{\eta t_1}{1-2t_1} +  N_R\log(1-2t_1)+N_R(N_s-1)\log(1-2t_2) \} \label{rate_function_optimization} \\
& = t^*_1u+t^*_2v - \frac{\eta t^*_1}{1-2t^*_1} +  N_R\log(1-2t^*_1)+N_R(N_s-1)\log(1-2t^*_2), \label{Rate_function}
\end{align}}
where  $t^*_1 = \frac{1}{2}- \frac{N_R+\sqrt{N_R^2+\eta u}}{2u}$ and $t^*_2 = \frac{1}{2}-\frac{N_R(N_s-1)}{v}$.

With the rate function given in~\eqref{Rate_function}, the LDP tells us that
{\small \begin{equation}
\lim_{L\uparrow \infty} \frac{1}{L}\log Pr\{(U_L,V_L) \in {\cal A}\} = -\inf_{(u,v) \in {\cal A}}I_L(u,v),
\end{equation}}
if the set ${\cal A} \in \mathbb{R}^2$ is continuous.

Consider ${\cal A}=\{(u,v)|u \leq \gamma v\}$ as the collection of miss-detection events. This leads to the following rate function:
{\small \begin{equation} \label{optimisation_rate}
I^*(\eta,\gamma)= \inf_{u/v \leq \gamma}I_L(u,v).
\end{equation}}
Using the Karush-Kuhn-Tucker (KKT) conditions of \eqref{optimisation_rate},  it can be shown that:
\begin{enumerate}
\item If $u/v < \gamma$, then $I^*(\eta,\gamma) = I_L(u^*,v^*) = 0$, where $u^* = 2N_R+\eta$, $v^* = 2N_R(N_s-1)$. In this case, $\gamma>\frac{2N_R+\eta}{2N_R(N_s-1)}$.
\item If $u/v=\gamma$, then $I^*(\eta,\gamma) = I_L(u^*,v^*) = I_L(\gamma v^*,v^*)$,
where $v^*$ is obtained by solving $\frac{\partial I_L(\gamma v, v)}{\partial v} = 0$, or equivalently by solving the following equation
{\small \begin{equation}\label{eq_L_v}
\frac{\gamma+1}{2} - \frac{N_R+\sqrt{N_R^2+\eta \gamma v}}{2v} -\frac{N_R(N_s-1)}{v}=0.
\end{equation}}
Let $x\doteq \sqrt{N_R^2+ \eta \gamma v}$,  \eqref{eq_L_v} can be rewritten into:
{\small \begin{equation}\label{eq_L_x}
\frac{\gamma+1}{\eta\gamma}(x^2-N_R^2) - x - N_R -2N_R(N_s-1) = 0.
\end{equation}}
Denote $x^*>0$ is a solution of \eqref{eq_L_x}, then $v^*$ can be obtained as follows:
{\small \begin{equation}\label{v_star_L}
v^* = \frac{x^{*2}-N_R^2}{\eta\gamma}.
\end{equation}}
Substituting $u^*$, $v^*$ into \eqref{Rate_function}  yields \eqref{rate_L}.
\end{enumerate}
This has concluded the proof of  \eqref{LDP:L}.

We now prove the monotonicity of $I^*(\eta,\gamma)$ with respect to $\eta$ when $\gamma<\frac{2N_R+\eta}{2N_R(N_s-1)}$.
Towards this end, we note
{\small \begin{align}
\frac{\partial I^*(\eta,\gamma)}{\partial \eta} &= \frac{1}{2}\left(1- \frac{\gamma v^*}{N_R+\sqrt{N_R^2+\gamma \eta v^*}}\right) = \frac{1}{2}\frac{v^*-2N_R(N_s-1)}{N_R+\sqrt{N_R^2+\gamma \eta v^*}}, \label{derivative_I}
\end{align}}
where \eqref{derivative_I} follows from \eqref{eq_L_v}. To show $\frac{\partial I^*(\eta,\gamma)}{\partial \eta}>0$, it is sufficient to show
{\small \begin{equation}\label{condition:v}
v^*>2N_R(N_s-1).
\end{equation}}
Considering \eqref{v_star_L} and noticing the fact that $x^*>0$, showing \eqref{condition:v} is equivalent to show
{\small \begin{equation}\label{condition:x}
x^* > \sqrt{N_R^2+2\gamma\eta N_R(N_s-1)}.
\end{equation}}
Denote $f(x) \doteq \frac{\gamma+1}{\eta\gamma}(x^2-N_R^2) - x - N_R -2\gamma\eta N_R(N_s-1)$ as the left side of \eqref{eq_L_x}.
Noticing that $f(x^*)=0$ and $x^*>0$, it is sufficient to show $f(x=\sqrt{N_R^2+2\gamma\eta N_R(N_s-1)})=2\gamma N_R(N_s-1) - N_R - \sqrt{N_R^2+2\gamma\eta N_R(N_s-1)}<0$.

Since $\gamma<\frac{2N_R+\eta}{2N_R(N_s-1)}$, it can be obtained that
{\small \begin{align}
(2\gamma N_R(N_s-1) - N_R)^2 - \left(\sqrt{N_R^2+2\gamma\eta N_R(N_s-1)}\right)^2 = 2\gamma N_R(N_s-1)\left[2\gamma N_R(N_s-1)-(2N_R+\eta )\right] <0. \nonumber
\end{align}}
It is therefore concluded that $\frac{\partial I^*(\eta,\gamma)}{\partial \eta}>0$ when $\gamma<\frac{2N_R+\eta}{2N_R(N_s-1)}$.

\section{Proof of Proposition~\ref{proposition_optimal_beam}}\label{proof_proposition3}
The average miss-detection probability $\bar{p}_{miss}$
{\small \begin{align}
\bar{p}_{miss} = \int_{\Omega}p_{miss}(\phi)p(\phi)d\phi \geq \int_{\Omega^{-}}p_{miss}(\phi)p(\phi)d\phi & \geq \int_{\Omega^{-}}P_{miss}(\eta^{-},L,\gamma)p(\phi)d\phi \label{eq:miss_lower_bound}\\
& = P^-P_{miss}(\eta^{-},L,\gamma), \label{eq:miss_lower_bound_2}
\end{align}}
where $P^-\doteq\int_{\Omega^{-}}p(\phi)d\phi>0$ and \eqref{eq:miss_lower_bound} is due to the facts that $P_{miss}(\eta,L,\gamma)$ is monotonically decreasing with respect to $\eta$ when both $N_s$ and $L$ are fixed~\cite{baharev2008computation}, and $\eta(\phi)\leq \eta^{-}$, $\forall \phi \in \Omega^-$.

According to \eqref{eq:miss_lower_bound_2} and applying Proposition~\ref{Proposition:LDP}, it can be obtained that:
{\small \begin{align}
\lim_{L\uparrow \infty} -\frac{1}{L}\log\bar{p}_{miss}\leq \lim_{L\uparrow \infty} -\frac{1}{L}\left(\log P^- +\log P_{miss}(\eta^{-},L,\gamma)\right)= I^*(\eta^{-},\gamma),
\end{align}}
where $I^*(\eta,\gamma)$ is the rate function given by \eqref{rate_L}.
This concludes the proof.

\renewcommand{\baselinestretch}{1.15}

\end{document}